\def\ZZ{\mathbb{Z}}
\def\PP{\mathbb{P}}
\newcommand{\F}{\mathbb{F}}
\newcommand{\calO}{\mathcal{O}}
\newcommand{\calP}{\mathcal{P}}
\newcommand{\fC}{\mathfrak{C}}
\newcommand{\cP}{\mathcal{P}}
\newcommand{\supp}{{\rm supp}}
\def \H {{\mathcal H}}
\newcommand{\cc}{\mathfrak{c}}
\newcommand{\cL}{\mathcal{L}}
\def\cQ{{\mathcal Q}}
\def\fC{{\mathfrak{C}}}
\newcommand{\bj}{{\bf j}}
\newcommand{\bt}{{\bf t}}
\newcommand{\bx}{{\bf x}}
\newcommand{\Ga}{\alpha}
\newcommand{\Gb}{\beta}
\newcommand{\Gl}{\lambda}
\newcommand{\Gk}{\kappa}
\newcommand{\Pin}{{P_\infty}}
\newcommand{\cB}{\mathcal{B}}
\newcommand{\GL}{\mathrm{GL}}
\newcommand{\PGL}{\mathrm{PGL}}
\newcommand{\AGL}{\mathrm{AGL}}
\DeclareMathOperator{\Div}{\mathrm{Div}}
\DeclareMathOperator{\Gal}{\mathrm{Gal}}
\DeclareMathOperator{\Tr}{\mathrm{Tr}}
\DeclareMathOperator{\Ker}{\mathrm{Ker}}
\DeclareMathOperator{\Aut}{\mathrm{Aut}}
\DeclareMathOperator{\ddiv}{\mathrm{div}}
\DeclareMathOperator{\ev}{\mathrm{ev}}
\DeclareMathOperator{\by}{\textbf{y}}
\DeclareMathOperator{\bm}{\textbf{m}}
\DeclareMathOperator{\bfi}{\textbf{i}}
\def\cX{{\mathcal X}}
\newtheorem{theorem}{Theorem}[section]
\newtheorem{lemma}[theorem]{Lemma}
\newtheorem{corollary}[theorem]{Corollary}
\newtheorem{proposition}[theorem]{Proposition}
\newtheorem{Maintheorem}[theorem]{\bf Main Theorem}
\theoremstyle{definition}
\newtheorem{definition}[theorem]{Definition}
\newtheorem{example}[theorem]{Example}
\newtheorem{remark}[theorem]{Remark}
\numberwithin{equation}{subsection}
\begin{document}
\title[Fast encoding of AG codes]{Encoding of algebraic geometry codes with quasi-linear complexity $O(N\log N)$}

\author{Songsong Li}
\address{School of Electronic Information and Electrical Engineering, Shanghai Jiao Tong University, Shanghai,  China}
\email{songsli@sjtu.edu.cn}

\author{Shu Liu}
\address{National Key Laboratory on Wireless Communications, University of Electronic Science and Technology of China, Chengdu, China}
\email{shuliu@uestc.edu.cn}

\author{Liming Ma}
\address{School of Mathematical Sciences, University of Science and Technology of China, Hefei, China}
\email{ lmma20@ustc.edu.cn}

\author{Yunqi Wan}
\address{Huawei,  China}
\email{wanyunqi@huawei.com}

\author{Chaoping Xing}
\address{School of Electronic Information and Electrical Engineering, Shanghai Jiao Tong University, Shanghai,  China}
\email{ xingcp@sjtu.edu.cn}

\subjclass[]{}
\keywords{}
		
\maketitle	
\begin{abstract}
Fast encoding and decoding of codes have been always an important topic in code theory as well as complexity theory. Although encoding is easier than decoding in general, designing an encoding algorithm of codes of length $N$ with quasi-linear complexity $O(N\log N)$ is not an easy task. Despite of the fact that algebraic geometry codes were discovered in the early of 1980s, encoding algorithms of algebraic geometry codes with quasi-linear complexity $O(N\log N)$  have not been found except for the simplest algebraic geometry codes--Reed-Solomon codes. 
 The best-known encoding algorithm of algebraic geometry codes based on a class of plane curves has quasi-linear complexity at least $O(N\log^2 N)$ ( {IEEE Trans. Inf. Theory 2021}). In this paper, we design an encoding algorithm of algebraic geometry codes with quasi-linear complexity  $O(N\log N)$. Our algorithm works well for a large class of algebraic geometry codes based on both plane and non-plane curves.

The main idea of this paper is to generalize the divide-and-conquer method from the fast Fourier Transform over finite fields to algebraic curves. More precisely speaking, suppose we consider encoding of algebraic geometry codes based on an algebraic curve ${\mathcal X}$ over $\mathbb{F}_q$. We first consider a tower of Galois coverings ${\mathcal X}={\mathcal X}_0\rightarrow{\mathcal X}_1\rightarrow\cdots\rightarrow{\mathcal X}_r$ over a finite field $\mathbb{F}_q$, i.e., their function field tower $\mathbb{F}_q({\mathcal X}_0)\supsetneq\mathbb{F}_q({\mathcal X}_{1})\supsetneq\cdots \supsetneq\mathbb{F}_q({\mathcal X}_r)$ satisfies that each of extension $\mathbb{F}_q({\mathcal X}_{i-1})/\mathbb{F}_q({\mathcal X}_i)$ is a Galois extension and the extension degree $[\mathbb{F}_q({\mathcal X}_{i-1}):\mathbb{F}_q({\mathcal X}_i)]$  {is a constant}. Then encoding of an algebraic geometry code based on ${\mathcal X}$ is reduced to the encoding of an algebraic geometry code based on ${\mathcal X}_r$. As a result, if there is an encoding algorithm of the algebraic geometry code based on ${\mathcal X}_r$ with quasi-linear complexity, then encoding of the algebraic geometry code based on ${\mathcal X}$ also has quasi-linear complexity.
\end{abstract}

\section{Introduction}\label{sec: introduction}	
Fast encoding and decoding of codes with good parameters is a major topic in code theory as well as complexity theory. We focus on fast encoding in this paper. Although encoding is easier than decoding in general, designing an encoding algorithm of codes of length $N$ with low complexity, particularly with quasi-linear complexity $O(N\log N)$ is not easy.

In the topic of encoding of algebraic geometry codes, people mainly studied encoding of algebraic geometry codes based on plane curves. Some of these codes based on plane curves include Reed-Solomon (RS for short) codes, elliptic codes, Hermitian codes, etc.  Due to the excellent parameters of algebraic geometry codes, it is urgent to design fast encoding and decoding algorithms to meet practical applications. The current paper moves one step forward on the fast encoding of algebraic geometry codes by designing algorithms of quasi-linear time  $O(N\log N)$.


Let $\F_q$ be a finite field of cardinality $q$. Let $\cX/\F_q$ be an algebraic curve defined over  $\F_q$. Given a Riemann-Roch space $\cL(D)$ for some divisor $D$ (see Subsection 2.1 for precise definition) and a set of rational points on $\cX$,  denoted by $\cP=\{P_1, P_2,\dots, P_N\}$ such that $\supp(D)\cap \cP=\emptyset$, the multipoint evaluation (MPE for short) of functions in $\cL(D)$ at the set $\cP$ is defined as
\[\ev_{\cP}: \cL(D)\rightarrow \F_q^N; f\mapsto \left(f(P_1),f(P_2),\dots,f(P_N)\right).\]
The algebraic geometry (AG for short) code $C(\calP, D)$ is defined to be $$C(\calP, D)=\{\ev_{\cP}(f) \mid f\in\cL(D)\}.$$

In the case where  $\cX$ is the projective line and $D=(k-1)P_{\infty}$ is the $(k-1)$-multiples of a single point $P_{\infty}$ for some positive integer $k\leq N$, the algebraic geometry codes given by the above encoding are just Reed-Solomon codes (RS codes for short). In the case of RS codes, assume that the $N$ points correspond to $N$ elements $\alpha_1,\alpha_2,\dots,\alpha_N$ in $\F_q$, respectively. Then $\cL(D)=\F_q[x]_{<k}$ is the space of polynomials over $\F_q$ of degree less than $k$ and
\[\ev_{\calP}: \F_q[x]_{<k}\rightarrow \F_q^N; f(x)\mapsto \left(f(\alpha_1),f(\alpha_2),\dots,f(\alpha_N)\right).\]

The encoding process of algebraic geometry codes from the message space $\F_q^k$ into the codeword space  is decomposed into two steps:
\begin{itemize}
\item[(i)] Step 1: choose an $\F_q$-basis $\cB=\{f_1,f_2,\dots,f_k\}$ of the Riemann-Roch space $\cL(D)$. Then we map a message $\bm=(m_1,m_2,\dots,m_k)\in \F_q^k$ to a function $f_{\bm}=\sum_{i=1}^km_if_i\in \cL(D)$.
\item[(ii)] Step 2: evaluate $f_{\bm}$ at the multipoint set $\calP$ to get the codeword $\cc=\ev_{\calP}(f_{\bm})$.
\end{itemize}
As long as a basis $\cB$ of $\cL(D)$  is found, encoding of algebraic geometry codes mainly consists of the MPE in Step 2. Thus, we regard MPE algorithms as encoding algorithms for AG codes since the bases of the Riemann-Roch spaces of algebraic curves that we are interested in are all explicitly given.

 A naive encoding algorithm of algebraic geometry codes  costs $O(N^2)$ operations: by firstly pre-computing a generator matrix $G_{k\times N}\in \F_q^{k\times N}$ whose rows are $\ev_{\calP}(f_i)$ for all $1\le i\le k$, then encoding of a message $\bm$ can be done in $O(N^2)$ operations in $\F_q$ by directly computing $\cc=\bm G$. However, one could get faster encoding algorithms via fast multipoint evaluation (FMPE for short).
 
 Below, we briefly review some related work on the fast encoding of algebraic geometry codes.

\subsection{Related work}  The simplest AG codes are Reed-Solomon codes. Some other well-known algebraic geometry codes include those from plane curves such as elliptic and Hermitian curves. As stated above, encoding RS codes corresponds to the MPE of polynomials in $\F_q[x]_{<N}$. The authors of \cite{BM74} gave a generic quasi-linear $O(M(N)\log N)$ algorithm for univariate MPE via polynomials modular arithmetics, where $M(N)$ stands for the cost of multiplying two polynomials of degree less than $N$ (one can also refer to \cite[Corollary 10.8]{von13}). By taking the currently best result $M(N)=O(N\log N4^{\log ^*N})$ given in \cite{HvDH19}, the univariate MPE will cost $O(N\log^2N4^{\log ^*N})$ operations in the underlying field $\F_q$. However, for sets $\calP$ with good structures, then the univariate MPE can be done in time $O(N\log N)$ via the fast Fourier transform (FFT for short). For instance, if $\calP$ is a multiplicative subgroup of $\F_q^*$  \cite{cooley1965, Pollard71} or an additive subgroup \cite{lin2014novel} and $N=|\cP|$ is $O(1)$-smooth, i.e., all prime factors of $N$ are constant ( {we sometimes simply call an $O(1)$-smooth integer a smooth integer}), then the univariate MPE on $\cP$ costs $O(N\log N)$ operations. The fast Fourier transform on multiplicative or additive subgroups of $\F_q$ is called multiplicative and additive FFT, respectively. It is worth mentioning that the successful generalization of FFT to the additive case is due to representations of polynomials in $\F_q[x]_{<N}$ under a new basis \cite{lin2014novel}. A fast MPE on $\cP$ with complexity $O(N\log N)$ directly gives an encoding algorithm for RS codes with complexity $O(N\log N)$ \cite{jj76, lin2014novel, lin16fft}. Therefore, in the encoding AG codes, a well-chosen evaluation set and a specific basis of the function space may lead to faster algorithms.

Apart from RS codes, few quasi-linear time encoding algorithms exist in the literature for algebraic geometry codes from plane curves. To the best of our knowledge, none of these algorithms can run in time $O(N\log N)$ operations of $\F_q$. Recently, quasi-linear time encoding algorithms for some algebraic geometry codes were proposed in \cite{BRS20}. The authors considered algebraic geometry codes from $C_{ab}$ plane curves. A $C_{ab}$ curve is defined by an equation $H(x,y):=\sum_{i,j}a_{ij}x^iy^j=0$. For such a curve, there is a unique common pole of $x$ and $y$, denoted by $P_{\infty}$.  {The parameters $a, b$ are corresponding to $\nu_{P_{\infty}}(x)=-a$ and $\nu_{P_{\infty}}(x)=-b$, respectively.}  The Riemann-Roch space $\cL(\lambda P_{\infty})$ has a basis $\{x^iy^j \mid ia+jb\leq \lambda,j\leq a-1\}$. Thus any $f\in \cL(\lambda P_{\infty})$ can be represented as $f=\sum_{i<a} f_i(x)y^i$ with $f_i(x)\in\F_q[x]$. Then the MPE of $f$ can be computed by repeating the fast univariate MPE two times, i.e., one can first compute the MPEs of each $f_i(x)$ at the set $\calP_x$ consisting of all $x$-coordinates of multipoints, and then for each $\alpha\in \calP_x$, compute the MPE of $f(\Ga,y)=\sum_{i<a} f_i(\Ga)y^i$ at the subset of multipoints with $x$-coordinate $\alpha$. This idea is simple and was first proposed in \cite{YB92}, and also employed in \cite{MOS01}, but these two works did not give the explicit complexity analysis. The authors of \cite{BRS20} discussed the complexity in different cases. If a curve $H/\F_q$ has an evaluation set with maximal semi-grid properties, then they showed that the two-layer MPE can be done in quasi-linear time. As they take advantage of fast univariate MPE in \cite{von13}, the final encoding complexity is $O(M(N)\log N)>O(N\log^2 N)$.

The earliest work on efficient encoding of algebraic geometry codes based on non-plane curves was given in \cite{TVG13,Lop00} where the authors showed that algebraic geometry codes based on modular curves have a polynomial time encoding. In \cite{SAKSD01}, the authors presented an algorithm with complexity $O(N^3)$ by obtaining a generator matrix of algebraic geometry codes on function fields of the Garcia-Stichtenoth tower. 
The first efficient encoding algorithm for algebraic geometry codes exceeding the GV bound with complexity less than $O(N^2)$ was presented by Narayanan and Weidner \cite{NW19}  where an encoding algorithm for algebraic geometry codes exceeding the GV bound with complexity $O(N^{w/2})$ (here $w$ is the matrix multiplicative exponent)  was designed. 



Many recent works considered the FMPE of multivariate polynomials \cite{BGKM22, BGGKU22, van20}. Although encoding of algebraic geometry codes also involves multipoint evaluation of multivariate polynomials, their variables are not independent and must satisfy certain algebraic relations. Thus, MPE on algebraic curves is a different topic from MPE of multivariate polynomials. Thus, we will not follow the work on fast MPE of multivariate polynomials but try to get faster algorithms for MPE of functions of algebraic curves.

\subsection{Our results and comparisons}
As we have already seen, so far there are no encoding algorithms in literature with quasi-linear time complexity $O(N\log N)$ for algebraic geometry codes (except for RS codes) for both plane and non-plane curves. The main contribution of this paper is to present encoding algorithms for algebraic geometry codes with quasi-linear time complexity $O(N\log N)$  by digging out the algebraic properties of algebraic curves with Galois covering.

The present paper contains two major results, one is for encoding of algebraic geometry codes from plane curves. The second one is for encoding algebraic geometry codes from non-plane curves. 
\begin{Maintheorem}\label{main:1}
Encoding of $q$-ary algebraic geometry codes of length $N$  based on the plane curve $A(y)=u(x)$ runs in time $O(N\log N)$ if $q$ or $q-1$ is $O(1)$-smooth, where $(A(y),u(x))$ is a pair of polynomials in $\F_q[y]$ and $\F_q[x]$ with $\gcd(\deg(A),\deg(u))=1$ and satisfies either (i) $A(y)=y^m$ with $m\mid(q-1)$ and $u(x)$ is square-free; or (ii) $A(y)$ is a $p$-linearized polynomial with $p$ being the characteristic of $\F_q$.
\end{Maintheorem}
Please refer to Corollaries \ref{cor:k} and \ref{cor:at} for the above Main Theorem 1.

\begin{remark}\label{rmk:normtrace curve}
 Let $q=\kappa^r$ and let $e$ be a divisor of $(\kappa^r-1)/(\kappa-1)$. Then the plane curve defined by $y^{\kappa^{r-1}}+y^{\kappa^{r-2}}+\cdots+y^\kappa+y=x^e$ is called a norm-trace curve over $\F_q$. If $r=2$, then this norm-trace curve is the well-known Hermitian curve given by $y^\kappa+y=x^{\kappa+1}$.
It was shown in \cite{BRS20} that the major candidates for $C_{ab}$ curve satisfying the $O(M(N)\log N)$ encodable requirements are norm-trace and other Hermitian-like curves. It is clearly seen that our plane curves defined in Main Theorem \ref{main:1} contain the class of norm-trace and other Hermitian-like curves. In other words, Main Theorem \ref{main:1} shows that algebraic geometry codes from norm-trace and other Hermitian-like curves allow encoding with complexity $O(N\log N)$, while it was shown in  \cite{BRS20} that algebraic geometry codes from norm-trace and other Hermitian-like curves have encoding complexity $O(M(N)\log N)>O(N\log^2 N)$.
\end{remark}

\begin{remark} The class of plane curves defined in Main Theorem \ref{main:1} contains a large family of well-known curves such as Kummer curves and Artin-Schreier curves. For instance, this class contains the Hermitian curves and their coverings. By a covering of the Hermitian curve, we mean a curve defined by $\prod_{\Ga\in W}(y-\Ga)=x^e$, where $W$ is an additive subgroup of the group $\{\Gb\in\F_q:\; \Gb^\kappa+\Gb=0\}$ and $e$ is a divisor of $\kappa+1$. Note that a covering of the Hermitian curve is also maximal, i.e., it achieves the Hasse-Weil bound (see Subsection 2.1 for definition).
Thus, algebraic geometry codes from coverings of the Hermitian curve also allow encoding with complexity $O(N\log N)$. Some other curves contained in  the class of curves defined in Main Theorem \ref{main:1} include elliptic curves, hyperelliptic curves, and norm-trace curves (see Example \ref{ex:5.2} for the detail)
\end{remark}

\begin{Maintheorem}\label{main:2} If $q$ or $q-1$ is $O(1)$-smooth, then there exists a family of non-plane algebraic geometry codes over $\F_q$ of length $N=q^{(n+1)/2}\rightarrow\infty$ and $\sqrt{q}\ge 2n$ with encoding complexity $O(N\log N)$.
\end{Maintheorem}

Please refer to Section 6  for the above Main Theorem 2.

\begin{remark} 
It was shown in \cite{NW19} that there exists a deterministic algorithm to encode a message into an algebraic geometry code based on non-plane curves in $O(N^{w/2})$ time, where $w$ is the matrix multiplicative exponent with $w<2.373$, while our algorithm can encode a message with quasi-linear complexity $O(N\log N)$  for algebraic geometry codes from non-plane curves.
\end{remark}

\subsection{Our techniques}\label{subsec:tech}
Algebraic geometry codes are based on algebraic curves over finite fields with many rational points. However, it is often more convenient to use the language of function fields rather than algebraic curves. It is a well-known fact in the community of arithmetic number theory that algebraic curves over finite fields are equivalent to function fields of one variable over finite fields, namely, there is a one-to-one correspondence between algebraic curves over finite fields and function fields of one variable over finite fields. In this paper, we choose to use the language of function fields with occasional use of the language of algebraic curves.

Due to the fast univariate multipoint evaluation via the FFT, an RS code of length $s$ is $O(s\log s)$ encodable. In general, we call a function field $F/{\F_q}$ FMPE-friendly if there is an MPE problem of length $s$ on $F$ that can be solved in $O(s\log s)$ operations of $\F_q$.  To make the definition meaningful, we assume $s=\Theta(|N(F)|)$, where $|N(F)|$ is the total number of rational places of $F$ (refer to Subsection~\ref{sec:ff}). If $s$ is too smaller than $|N(F)|$, it is meaningless to concern the AG codes of length $s$ from $F$. It is known that $\F_q(x)$ is FMPE-friendly if $q$ or $q-1$ is $O(1)$-smooth \cite{Pollard71,lin2014novel}. 

Our idea is to construct algebraic extensions $E$ over an FMPE-friendly function field $F$ such that the extended MPE problem of length $N=[E: F]\cdot s$ on $E$ can be efficiently reduced to the MPE problem on $F$. Then the extended MPE on $E$ can be solved in $O(N\log N)$. The reduction is motivated by the divide-and-conquer method in FFT. To achieve this, we make some assumptions on the extension $E/F$ and then show in Section~\ref{sec:inst} that these assumptions are satisfied for a large class of function fields containing the most well-known curves. 

To be more precise, let $F$ be an FMPE-friendly function field such that an MPE on $F$ of length $s$ can be done in $O(s\log s)$ operations. Assume the multipoint set and the Riemann-Roch space of the MPE are $\cQ=\{Q_1,\ldots, Q_s\}$ and $\cL(n Q_{\infty})$, respectively, where $Q_{\infty}, Q_1,\ldots, Q_s$ are $s+1$ distinct rational places of $F$. Suppose that $E/F$ is a finite extension satisfying 
\begin{enumerate}
\item The place $Q_{\infty}$ is totally ramified in $E/F$.
\item The $s$ places in $\cQ$ are splitting completely in $E/F$.
\end{enumerate}
Let $P_{\infty}$ be the unique place of $E$ lying over $Q_{\infty}$ and $\cP$ be the set of places of $E$ lying over places in $\cQ$. Then $|\cP|=N=ms$, where $m=[E: F]$. Let $\lambda$ be an integer not greater than $nm$. The extended MPE on $E$ is the linear map $\ev_{\cP}$ from $\cL(\lambda P_{\infty})$ to $\F_q^N$. In the following, we reduce the MPE $\ev_{\calP}(f)$ of an $f\in\cL(\lambda P_{\infty})$ to the MPE $\ev_{\cQ}$ of $m$ functions in $\cL(nQ_{\infty})$ in time $O(N\log m)$ if the field extension $E/F$ satisfies certain properties.

Suppose $E/F$ is an abelian extension and the extension degree $m=\prod_{i=1}^r p_i$ is an $O(1)$-smooth number, where all $p_i$ are constant primes. Then there is a chain of subgroups of $G=\Gal(E/F)$:
\[G_0=\{1\},\  G_{i-1}\leq G_{i}\ \text{of\ index}\ [G_i : G_{i-1}]=p_i,\  \text{for}\ i=1,\ldots,r.\]
Thus each $G_i$ has order $\prod_{j=1}^i p_j$. Let $E_i=E^{G_i}$ be the fixed subfield under $G_i$. By the Galois theory \cite{hir08}, we have a tower of fields
\begin{equation}\label{eq:ft}
F=E_r\subsetneq E_{r-1}\subsetneq \ldots \subsetneq E_1\subset E_0=E.
\end{equation}
Next, we construct an $\F_q$-basis $\cB$ of $\cL\big(\Gl P_{\infty}\big)$ from the tower (see Lemma~\ref{lem:bases}). Under this basis $\cB$, a function $f\in\cL\big(\Gl P_{\infty}\big)$ can be written as a combination of $p_1$ functions in the Riemann-Roch space $\cL\big(\lfloor \Gl/p_1\rfloor P_{\infty}^{(1)}\big)$ of $E_1$ (where $P_{\infty}^{(1)}=P_{\infty}\cap E_1$ is a place of $E_1$). 
Thus,  MPE of $f\in\cL\big(\Gl P_{\infty}\big)$ can be reduced to $p_1$ MPEs of functions in $\cL\big(\lfloor \Gl/p_1\rfloor P^{(1)}_{\infty}\big)$ at $\calP^{(1)}=\calP\cap E_1$, which has size $|\calP^{(1)}|=N/{p_1}$, and  {then get $\ev_{\calP}(f)$ from these $p_1$ MPEs in $p_1O(N)$ operations (see Equation~\eqref{eq:fff} in \S 3)}. Denote the complexity of computing the MPE of length $N$ by $\fC(\lambda, N)$. By the above analysis, $\fC(\lambda, N)$ satisfies a recursive formula
\begin{equation}\label{eq:recfor}
\fC(\lambda, N)=p_1\fC\left(\lfloor \lambda/{p_1}\rfloor, N/{p_1}\right)+p_1O(N).
\end{equation}
Do the recursive reductions till to $E_r=F$, then $f$ can be written as a combination of $m$ functions in the Riemann-Roch space $\cL\big(\lfloor \Gl/m\rfloor  P_{\infty}^{(r)}\big)$ of $F=E_r$ (where $P_{\infty}^{(r)}=P_{\infty}\cap E_r=Q_{\infty}$). In this case, the recursive formula~\eqref{eq:recfor} leads the total running time equal to
\[\begin{split}
\fC(\lambda, N)&=p_1p_2\cdots p_r\cdot \fC\left(\lfloor \lambda/{m}\rfloor, s\right)+(p_1+p_2+\cdots +p_r)O(N)\\
&=m\cdot O(s\log s)+O(N\log m)=O(N\log N).
\end{split}\]
where the second equality follows from the assumption that the MPE of length $s$ of $F$ runs in $O(s\log s)$ operations.

\subsection{Organization of the paper}
The paper is organized as follows. In Section 2, we introduce some facts about function fields including function field extensions and a tower of function fields. Algebraic geometry codes are also introduced in Section 2. Section 3 presents our divide-and-conquer technique by reducing MPE on an extension field $E$ to a subfield $F$ and concludes our main result on encoding AG codes from plane curves. Section 4 instantiates some specific field extensions that satisfy all conditions proposed in Section 3, i.e., the Kummer extension, the Artin-Schreier extension, and the mixed extension of Kummer and Artin-Schreier. In Section 5, we give examples of algebraic geometry codes that are encodable with complexity $O(N\log N)$. Section 6 presents an encoding algorithm for algebraic geometry codes from the Hermitian tower with complexity $O(N\log N)$.

\section{Preliminary}
In this section, let us review some results on function fields.
\subsection{Function fields}\label{sec:ff}
Let us introduce some basic facts about function fields over finite fields. The reader may refer to \cite{C51,sti09} for the details.

Let $F$ be a field containing a finite field $\F_q$. We say that $F$ is a function field of one variable over $\F_q$ if there is an element $x\in F$ that is transcendental over $\F_q$ such that the field extension $F/\F_q(x)$ is algebraic. We say that $\F_q$ is the full constant field of $F$ if every element of $F\setminus\F_q$ is transcendental over $\F_q$. In this case, we simply denote it by $F/\F_q$.
Let $F/\F_q$ be a function field of one variable over $\F_q$. Let $\PP_F$ denote the set of places of $F$. For each place $P\in \PP_F$, one can define a discrete valuation $\nu_P$ which is a map   from $F$ to $\ZZ\cup\{\infty\}$ satisfying certain properties (see \cite[Chapter 1]{sti09}). The integral  ring of $P$ is given by ${\calO}_P:=\{x\in F:\; \nu_P(x)\ge 0\}$. Then ${\calO}_P$ is a local ring and its unique maximal ideal is $\{z\in F:\; \nu_P(z)>0\}$. With abuse of notation, we still denote this ideal by $P$. The ideal $P$ is a principal ideal. A generator $t$ of $P$ is called a prime element or local parameter at $P$. It is easy to see that $t\in F$ is a local parameter at $P$ if and only if $\nu_P(t)=1$.  The residue class field ${\calO}_P/P$ is denoted by $F_P$.  Then $F_P$ is a field extension of $\F_q$. The degree of $P$, denoted by $\deg(P)$, is defined to be the extension degree $[F_P:\F_q]$. We say that $P$ is $\F_q$-rational (or simply rational if there is no confusion) if $\deg(P)=1$.
The free abelian group generated by $\PP_F$ is called the divisor group of $F$, denoted by $\Div(F)$. Thus, every element of $\Div(F)$ has the form $\sum_{P\in\PP_F}n_PP$, where $n_P\in\ZZ$ and only finitely many $n_P$ are nonzero. We say that a divisor $D=\sum_{P\in\PP_F}n_PP $ is effective or positive if $n_P\ge 0$ for all $P\in\PP_F$. We use the notation $D\ge 0$ for an effective divisor $D$. The degree of $D=\sum_{P\in\PP_F}n_PP $, denoted by $\deg(D)$, is defined to be $\deg(D)=\sum_{P\in\PP_F}n_P\deg(P)\in \ZZ $. For a nonzero element $z\in F$, we define its principal divisor by $\ddiv(z)=\sum_{P\in\PP_F}\nu_P(z)P$ and  its pole divisor by $(z)_{\infty}=-\sum_{P: \nu_P(z)<0}\nu_P(z)P$.

For a divisor  $D=\sum_{P\in\PP_F}n_PP $, one can associate it with an $\F_q$-vector space, called the Riemann-Roch space
\begin{equation}\label{eq:2.1.1} \cL(D)=\{z\in F\setminus\{0\}:\; \ddiv(z)+D\ge 0\}\cup\{0\}.\end{equation}
Then $\cL(D)$ is an $\F_q$-vector space of dimension at least $\deg(D)+1-g$. We denote this dimension by $\ell(D)$. Furthermore, we have the equality $\ell(D)=\deg(D)+1-g$ if $\deg(D)\ge 2g(F)-1$, where $g(F)$ is the genus of $F$.

For a function field $F/\F_q$, denote by $N(F)$ the number of $\F_q$-rational places. Then the well-known Hasse-Weil bound says that
\begin{equation}\label{eq:2.1.2} N(F)\le q+1+2g(F)\sqrt{q}.\end{equation}
An algebraic function field achieving the above Hasse-Weil bound is called a maximal function field (or equivalently a maximal curve).

\subsection{Weierstrass semigroups}
Let $F/\F_q$ be a function field of one variable over $\F_q$ of genus $g$. Let $Q_\infty$ be a rational place of $F$.  If there exists an element $x\in F$ with pole divisor $(x)_\infty=nQ_\infty$ for a non-negative integer $n$, then such an integer $n$ is called a pole number of $Q_\infty$. The set of all pole numbers of $Q_\infty$ forms a numerical semigroup under addition, which is called the Weierstrass semigroup of $Q_\infty$ and denoted by $H(Q_\infty)$, i.e.,
\[H(Q_\infty)=\{n\in \mathbb{N}: \exists\ x\in F \text{ such that } (x)_\infty=nQ_\infty\}.\]
From the Weierstrass Gap Theorem \cite[Theorem 1.6.8]{sti09}, the cardinality of the set $\mathbb{N}\setminus H(Q_\infty)$ is $|\mathbb{N}\setminus H(Q_\infty)|=g$. 
For any non-negative integer $j\in H(Q_\infty)$, there exists an element $t_j\in F$ such that $(t_j)_\infty=jQ_\infty$. 
From the strict triangle inequality \cite[Lemma 1.1.11]{sti09}, these elements in $\{t_j: j\in H(Q_\infty)\}$ are linearly independent over $\F_q$, since they have pairwise distinct discrete valuations at the place $Q_\infty$. 
Let $\mu$ be a positive integer. The Riemann-Roch space $\cL(\mu Q_\infty)$ is an $\mathbb{F}_q$-vector space spanned by the elements in 
$$\{t_j: j\in H(Q_\infty) \text{ and } j\le \mu\}.$$

\subsection{Algebraic geometry codes}
Let $F/\F_q$ be a function field of one variable over $\F_q$ of genus $g$ with
$N(F)\ge 1$. Choose distinct rational places $P_{1},P_2,\ldots,P_N$ of $F$, where
$N>g$, and let $G$ be a divisor of $F$ with \supp$(G)\cap\{P_{1},P_2,\ldots,P_{N}\}
=\emptyset$. Consider the Riemann-Roch space $\cL(G)$ and note that
$\nu_{P_i}(f)\ge 0$ for $1\le i\le N$ and all $f\in\cL(G)$, i.e.,
$$\cL(G)\subseteq\bigcap_{i=1}^{N}{\calO}_{P_{i}}.$$
Thus, it is meaningful to define the $\F_q$-linear map
$\ev:\cL(G)\longrightarrow\F_{q}^
{N}$ by
$$\ev(f)=(f(P_{1}),f(P_2),\ldots,f(P_{N})) \qquad \hbox{for all} \; f\in\cL(G),$$
where $f(P)$ denotes, as usual, the residue class of $f\in {\calO}_P$ modulo the
place $P$. The image of $\ev$ is a linear subspace of $\F_{q}^N$ which is
denoted by $C(\cP, G)$ and called an {algebraic geometry code} (or {\bf
AG code}), where $\cP=\{P_1,P_2,\dots,P_N\}$. The map $\ev(f)$ is usually called the multipoint evaluation (MPE) of $f$ at $\calP$ and $N=|\calP|$ is called its length. To emphasize the evaluation is defined on the set $\calP$, we denote the map $\ev$ by $\ev_{\calP}$.

A fast algorithm for MPE is in fact a fast encoding algorithm for algebraic geometry codes. In this work, we mainly concern encoding algorithms for one-point algebraic geometry codes with complexity $O(N\log N)$, i.e., the one-point MPE problem on $F$ can be solved in $O(N\log N)$.

We wrap up this subsection by including some standard results on the parameters of algebraic geometry codes (see \cite[Chapter 2]{sti09}).
\begin{lemma}\label{lem:agcp}
\label{lem:2.1} { Let} $F/\F_q$ { be a function field of one variable of genus g
and let} $P_{1}, P_2, \ldots, P_N$ { be distinct rational places of F. Put $\cP=\{P_1,P_2,\dots,P_N\}$. Choose a
divisor G of F with} $g\le\deg(G)<N$ {\it and} $\supp(G)\cap\{P_{1},\ldots,
P_{N}\}=\emptyset$. {\it Then} $C(\cP, G)$ {\it is an}
$[N,k,d]$-{\it linear code over} $\F_q$ { with}
$$k=\ell(G)\ge\deg(G)-g+1, \qquad d\ge N-\deg(G).$$
{\it Moreover}, $k= \deg(G)-g+1$ {\it if} $\deg(G)\ge 2g-1$.
\end{lemma}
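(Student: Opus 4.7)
The plan is to verify the three claims in Lemma \ref{lem:agcp} in sequence, each relying on a short degree computation for divisors. The evaluation map $\ev_{\cP} \colon \cL(G)\to \F_q^N$ is manifestly $\F_q$-linear (sums and scalar multiples of functions evaluate coordinate-wise), so its image $C(\cP,G)$ is automatically a linear subspace of $\F_q^N$. It remains to compute $k=\dim_{\F_q} C(\cP,G)$, verify the Riemann--Roch bound, and establish the distance inequality.

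For the dimension, I would show that the hypothesis $\deg(G)<N$ forces $\ev_{\cP}$ to be injective. Given $f\in\cL(G)$ with $\ev_{\cP}(f)=0$, every $P_i$ is a zero of $f$, so $\nu_{P_i}(f)\ge 1$ for all $i$, and therefore
\[
\ddiv(f)+G-\sum_{i=1}^{N}P_i \;\ge\; 0.
\]
If $f\neq 0$, taking degrees (and using that a nonzero principal divisor has degree $0$) yields $0\le \deg(G)-N<0$, a contradiction. Hence $\ker(\ev_{\cP})=0$ and $k=\ell(G)$. The inequality $\ell(G)\ge \deg(G)-g+1$ is exactly the Riemann inequality already recalled in Subsection~\ref{sec:ff}, and the equality $\ell(G)=\deg(G)-g+1$ when $\deg(G)\ge 2g-1$ is the standard consequence of the Riemann--Roch theorem in the same paragraph. (The disjointness $\supp(G)\cap\cP=\emptyset$ is exactly what guarantees $f\in\cL(G)\subseteq\bigcap_i\calO_{P_i}$, so the residues $f(P_i)$ make sense in the first place.)

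For the distance, I would run essentially the same divisor argument on an arbitrary nonzero codeword. Let $0\neq f\in\cL(G)$ and suppose $\ev_{\cP}(f)$ has Hamming weight $w$; let $I\subseteq\{1,\dots,N\}$ index its zero coordinates, so $|I|=N-w$. The vanishing conditions $\nu_{P_i}(f)\ge 1$ for $i\in I$ combined with $f\in\cL(G)$ give
\[
\ddiv(f)+G-\sum_{i\in I}P_i \;\ge\; 0.
\]
Taking degrees and using again that $\deg(\ddiv(f))=0$ for $f\neq 0$ produces $\deg(G)-(N-w)\ge 0$, i.e.\ $w\ge N-\deg(G)$. Minimizing over all nonzero codewords yields $d\ge N-\deg(G)$.

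This result is standard and presents no substantive obstacle; the only nontrivial inputs are the Riemann--Roch inequality (together with its equality form for divisors of degree at least $2g-1$) and the fact that principal divisors have degree zero. The hypothesis $g\le \deg(G)<N$ plays a double role: the lower bound $g\le\deg(G)$ combined with Riemann--Roch guarantees $\ell(G)\ge 1$ so that the code is nonzero, while the upper bound $\deg(G)<N$ is precisely what makes both the injectivity argument and the distance argument go through.
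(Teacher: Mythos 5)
Your proof is correct and is exactly the standard argument (injectivity of $\ev_{\cP}$ from $\deg(G)<N$, Riemann--Roch for the dimension, and the divisor-degree count for the distance) that the paper itself does not reproduce but simply cites from \cite[Chapter 2]{sti09}. No gaps.
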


From Lemma \ref{lem:2.1}, we know that a function field $F/\F_q$ of genus $g$ with $N$ rational places gives a $q$-ary $[N,k,d]$-linear code $C(\cP, G)$ with
\begin{equation}\label{eq:2.2.1}R+\delta\ge 1-\frac{g-1}N,\end{equation}
where $R=\frac kn$ denotes the rate of the code $C(\cP, G)$ and $\delta$ denotes the relative minimum distance of the code $C(\cP, G)$.

\subsection{Extension of function fields}
Let both $E$ and $F$ be function fields of one variable with the full constant field $\F_q$ such that $E/F$ is a finite field extension. If $F$ and $E$ are the function fields of the curve ${\mathcal X}$ and ${\mathcal Y}$, respectively, we simply say that  ${\mathcal X}$ is a covering of ${\mathcal Y}$, i.e., there is a rational map from ${\mathcal Y}$ to ${\mathcal X}$.

 For simplicity, we always assume that $E/F$ is a Galois extension in this paper. Let $m=[E:F]$. Then for each place $Q$ of $F$, we have a factorization of $Q=\prod_{i=1}^rP_i^e$, where $P_1,P_2,\dots,P_r$ are places of $E$. The power $e$ is called the ramification index, denoted by $e(P_i|Q)$. Moreover, we have $\deg(P_1)=\deg(P_2)=\cdots=\deg(P_r)$ and the identity $m=re(P_i|P)f(P_i|Q)$, where $f(P_i|Q)=[E_{P_i}:F_Q]$ and it is called the relative degree of $P_i$ over $Q$. We say that (i) $Q$ is completely splitting if $r=m$ and $e(P_i|Q)=f(P_i|Q)=1$;  (ii) $Q$ is totally ramified if $r=f(P_i|Q)=1$ and $e(P_i|Q)=m$. In the factorization $Q=\prod_{i=1}^rP_i^e$, the places $P_i$ contains the place $Q$. Furthermore, a place $R$ of $E$ contains $Q$ if and only if $R$ appears in this factorization. We say that a place $P$ of $E$ lies over a place $Q$ of $F$ or $Q$ lies under $P$ if $P$ contains $Q$, denoted by $P|Q$. For a place $P$ of $E$ lies over a place $Q$ of $F$ with $P|Q$, apart from the above ramification index and relative degree, we have another important parameter called different exponent, denoted by $d(P|Q)$. The parameter $d(P|Q)$ is closely related to the ramification index $e(P|Q)$. More precisely speaking, we have $d(P|Q)\ge e(P|Q)-1$ and the equality holds if $\gcd(e(P|Q),q)=1$.

The Hurwitz genus formula tells us that the genus $g(E)$ of $E$ is determined by the genus $g(F)$ of $F$ and the different of $E/F$. Namely, we have
\[2g(E)-2=[E:F](2g(F)-2)+\sum_{Q\in\PP_F}\sum_{P|Q}d(P|Q)\deg(P).\]
To distinguish Riemann-Roch spaces of $E$ and $F$, we use $\cL_E(G)$ and $\cL_F(D)$ to denote the associated Riemann-Roch spaces of $G\in\Div(E)$ and $D\in\Div(F)$.

Let $E/F$ be a finite Galois extension. Let $Q$ be a place of $F$. The integral closure ${\calO}_Q(E)$ at $Q$ is defined to be the set of elements of $E$ that are integral over ${\calO}_Q$. It is proved in \cite[Corollary 3.3.5]{sti09} that  the integral closure ${\calO}_Q(E)$ at $Q$ is equal to $\bigcap_{P|Q} {\calO}_P$. A set $\{z_1,z_2,\dots,z_m\}$ of ${\calO}_Q(E)$ is called an integral basis at $Q$ if ${\calO}_Q(E)=\oplus_{i=1}^mz_i\calO_Q$.

\section{Encoding of algebraic geometry codes  }\label{sec:FMPE}

Let $E/\F_q$  be a function field of one variable over $\F_q$. Let $P_{\infty}, P_1,\dots,P_N$ be $N+1$ distinct rational places of $E$. Let us consider a one-point algebraic geometry code $C(\cP,\Gl P_\infty)$, where $\cP=\{P_1,P_2,\dots,P_N\}$ and $\Gl$ is a positive integer less than $N$. The main purpose of this section is to design an encoding algorithm for the code $C(\cP,\Gl P_\infty)$ with complexity $O(N\log N)$, or equivalently,  given an explicit basis of  $\cL(\Gl P_\infty)$, an FMPE algorithm for any $f\in \cL(\Gl P_\infty)$ at $\cP$.

Our idea is to reduce the MPE problem from the code $C(\cP,\Gl P_\infty)$ to the MPE problem from a shorter algebraic geometry code based on a subfield $F$ of $E$. We require that the extension $E/F$ satisfies the following properties to achieve this goal.

\bigskip
\fbox{\begin{minipage}{35em}
\begin{center} {\bf Four Properties for the extension $E/F$ (P4)}\end{center}
\begin{itemize}
\item[(i)] $E/F$ is an abelian extension with $E=F(y)$ of degree $m$ for some $y\in E$.
\item[(ii)] Assume all conjugate roots of $y$ are $\F_q$-linear functions of $y$, i.e., for all $\sigma\in \Gal(E/F)$,  $ \sigma(y)=ay+b$ for some $ a\in\F_q^*$ and $ b\in \F_q$.
\item[(iii)] There is a rational place $P_{\infty}$ of $E$ that is totally ramified in $E/F$. 
\item[(iv)] $\gcd(m,\nu_{\Pin}(y))=1$ and $\{1,y,\dots,y^{m-1}\}$ is an integral basis of $E/F$ at place $Q$ with $\Pin\nmid Q$. 
\end{itemize}
\end{minipage}}

\bigskip

Assume the degree of the field extension $E/F$ is $[E: F]=m$ with prime factorization $m=\prod_{i=1}^k p_i$. We say that $m$ is $B$-smooth for a positive real $B$ if $p_i\le B$ for all $1\le i\le r$.  We simply say that $m$ is smooth if all $p_i$ are constants.

Let $G=\Gal(E/F)$. By the first property in the above box (P4) and the structure theorem for finite abelian groups \cite{KS04}, then there is an ascending chain of subgroups of $G$ as follows
\[\{1\}=G_0\subsetneq G_{1}\subsetneq \dots \subsetneq G_{r-1}\subsetneq G_r=G,\ \text{where}\ p_{i}=|G_{i}/G_{i-1}|\ \text{for\ all}\ 1\le i\le r.\]
Let $E_k=E^{G_{k}}$ be the fixed subfield of $E$ under $G_k$, i.e.,
\[E_k=\{u\in E :\; \sigma(u)=u\ \text{for\ all}\ \sigma\in G_{k}\}.\]
Then $E_0=E$ and $E_r=F$.  Furthermore, $E_0/E_k$ is a Galois extension with degree $[E_0: E_k]=|G_{k}|=\prod_{i=1}^k p_i$ for all $1\le k\le r$.  Thus, we obtain a tower of fields
\[E=E_0\supsetneq E_{1}\supsetneq \dots \supsetneq E_{r-1}\supsetneq E_r=F.\]
As $E/F$ is a finite Galois extension, the extension $E/F$ is simple, i.e., there exists an element $y\in E$ such that $E=F(y)$.
Define
\begin{equation}\label{eq:doy}
y_k=\prod_{\sigma\in G_k} \sigma(y), \ k=0,1,\dots,r.\end{equation}
Then, by the second property in the box (P4),
 \begin{equation}\label{eq:royk}
 y_k=\phi_{k}(y),\ \text{for\ some\ polynomial}\ \phi_k(T)\in \F_q[T]\ \text{of\ degree}\ |G_k|.
 \end{equation}
This implies that $[E: F(y_k)]\leq |G_k|$. Now  we claim that $ E_k= F(y_k)$ for all $0\le k\le r$. 
From Galois theory, it is easy to see that $y_k\in E_k$ and $F(y_k)\subseteq E_k$. Moreover, we have $|G_k|=[E: E_k]\leq [E: F(y_k)]\leq |G_k|$ which forces $[E: F(y_k)]=[E: E_k]$. 
Note that $y_0=y$ and $y_r\in F$.

For any integer $i$ satisfying $0\leq i<m=\prod_{k=1}^r p_k$, it can be uniquely written as
\begin{equation}\label{eq:exp}
i=i_0+i_1\cdot p_1+i_2\cdot p_1p_2+\cdots+i_{r-1}\cdot \prod_{j=1}^{r-1} p_j,\ \text{where\ each}\ 0\leq i_k\leq p_{k+1}-1.
\end{equation}
Let us denote by the boldface $\bfi$ the vector $(i_{r-1},i_{r-2},\dots,i_0)$ given in \eqref{eq:exp}.
Furthermore, we define a function $\by^{\bfi}$ associated with $i$ as follows
\begin{equation}\label{eq:expansion}
\by^{\bfi}:= y_{r-1}^{i_{r-1}}\cdot y_{r-2}^{i_{r-2}}\cdots y_0^{i_0}.
\end{equation}

The following lemma shows that an $\F_q$-basis of the Riemann-Roch space $\cL(\lambda P_{\infty})$ can be constructed explicitly if the extension $E/F$ satisfies four properties given in the above box (P4). Let $Q_\infty$ be the unique place of $F$ lying under $P_\infty$, i.e., $Q_\infty=P_\infty\cap F$.

\begin{lemma}\label{lem:bases}
Let $E=F(y)$ be a Galois extension satisfying (P4) given in the above box. Assume $d=-\nu_{P_{\infty}}(y)$. Let $y_0,y_1,\dots,y_r$ be defined as in Equation~\eqref{eq:doy} and ${\bf i}=(i_{r-1},i_{r-2},\dots,i_0)$ defined by Equation~\eqref{eq:exp} for $i=0, 1,\dots, m-1$. For any integer $j\in H(Q_\infty)$ in the Weierstrass semigroup of $Q_{\infty}$, let $t_j$ be an element in $F$ such that its pole divisor $(t_j)_{\infty}=jQ_{\infty}$.  For an integer $\lambda>0$, set
\[\cB_i=\left\{\by^{\bf i}\cdot t_j:=y_{r-1}^{i_{r-1}}\cdots y_1^{i_1}y_0^{i_0}\cdot t_j \mid  i\cdot d+j\cdot m\leq \lambda\ \text{and}\ j\in H(Q_\infty)\right\},\ i=0,1,\ldots,m-1.\]
If $\gcd(d, m)=1$ and $\{1, y, \dots, y^{m-1}\}$ is an integral basis of $E/F$ at $Q$ for any place $Q\neq Q_{\infty}$ of $F$, then $\cB=\cup_{i=0}^{m-1}\cB_i$ is an $\F_q$-basis of the Riemann-Roch space $\cL(\lambda P_{\infty})$.
 \end{lemma}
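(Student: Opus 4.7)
The plan is to verify the three defining properties that make $\mathcal{B}$ an $\F_q$-basis of $\mathcal{L}(\lambda P_\infty)$: every element of $\mathcal{B}$ lies in $\mathcal{L}(\lambda P_\infty)$; the elements are $\F_q$-linearly independent; and they span $\mathcal{L}(\lambda P_\infty)$. The argument rests on one basic observation: since $P_\infty$ is the unique place of $E$ lying over $Q_\infty$, every $\sigma\in\mathrm{Gal}(E/F)$ fixes $P_\infty$, so $\nu_{P_\infty}(\sigma(y))=\nu_{P_\infty}(y)=-d$. This pins down the $P_\infty$-valuation of every relevant product, and the coprimality $\gcd(d,m)=1$ then forces the resulting valuations to be pairwise distinct; the strict triangle inequality \cite[Lemma 1.1.11]{sti09} converts distinctness into both linear independence and the spanning argument.

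For membership, I would first compute $\nu_{P_\infty}(y_k)=\sum_{\sigma\in G_k}\nu_{P_\infty}(\sigma(y))=-d|G_k|$ and recall $\nu_{P_\infty}(t_j)=m\,\nu_{Q_\infty}(t_j)=-jm$ by total ramification of $P_\infty$. The mixed-base expansion in \eqref{eq:exp} then gives
\[
\nu_{P_\infty}\bigl(\by^{\bf i}t_j\bigr)=-d\sum_{k=0}^{r-1}i_k\prod_{\ell\le k}p_\ell-jm=-(id+jm)\ge -\lambda.
\]
For any place $Q\ne Q_\infty$ of $F$ and any $P\mid Q$ in $E$, the identity $y_k=\phi_k(y)$ with $\phi_k\in\F_q[T]$ of degree $|G_k|$ exhibits $\by^{\bf i}$ as a polynomial in $y$ of degree $i\le m-1$ with $\F_q$-coefficients; by the integral-basis hypothesis this is integral at $Q$, so $\nu_P(\by^{\bf i})\ge 0$. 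Since $t_j\in F$ has its only pole at $Q_\infty$, $\nu_P(t_j)\ge 0$ as well, and therefore $\by^{\bf i}t_j\in\mathcal{L}(\lambda P_\infty)$.

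Linear independence follows at once from the distinctness of $\nu_{P_\infty}(\by^{\bf i}t_j)=-(id+jm)$ over admissible $(i,j)$: if $id+jm=i'd+j'm$, then $(i-i')d\equiv 0\pmod m$, and $\gcd(d,m)=1$ with $0\le i,i'<m$ forces $i=i'$ and then $j=j'$. The strict triangle inequality at $P_\infty$ therefore prevents any nontrivial $\F_q$-combination from vanishing.

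Spanning is the core step. Given $f\in\mathcal{L}(\lambda P_\infty)$, I would expand $f=\sum_{i=0}^{m-1}g_i\by^{\bf i}$ with $g_i\in F$. Since each $\by^{\bf i}$ is a degree-$i$ polynomial in $y$ with nonzero $\F_q$-leading coefficient, the transition matrix between $\{1,y,\dots,y^{m-1}\}$ and $\{\by^{\bf i}\}_{i=0}^{m-1}$ lies in $\mathrm{GL}_m(\F_q)\subseteq\mathrm{GL}_m(\calO_Q)$ for every $Q\ne Q_\infty$; hence $\{\by^{\bf i}\}$ is also an integral basis of $E/F$ at each such $Q$, which forces every $g_i$ to have poles only at $Q_\infty$. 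The valuations $\nu_{P_\infty}(g_i\by^{\bf i})=m\,\nu_{Q_\infty}(g_i)-id$ are pairwise distinct modulo $m$ (again by $\gcd(d,m)=1$), so the strict triangle inequality gives $\nu_{P_\infty}(f)=\min_i\nu_{P_\infty}(g_i\by^{\bf i})\ge-\lambda$. This rearranges to $g_i\in\mathcal{L}_F(\lfloor(\lambda-id)/m\rfloor Q_\infty)$, and expanding each $g_i$ in the basis $\{t_j:j\in H(Q_\infty),\ id+jm\le\lambda\}$ exhibits $f$ as an $\F_q$-combination of elements of $\mathcal{B}$. The step requiring the most care is confirming that $\{\by^{\bf i}\}$ inherits the integral-basis property from $\{1,y,\dots,y^{m-1}\}$; once that is in hand, the remainder is careful bookkeeping of valuations at $P_\infty$ and $Q_\infty$ driven by total ramification and the coprimality condition.
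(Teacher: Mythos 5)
Your proof is correct and follows essentially the same route as the paper's: membership via the valuation computation $\nu_{P_\infty}(\by^{\bf i}t_j)=-(id+jm)$, independence from pairwise-distinct valuations plus the strict triangle inequality, and spanning via the integral-basis hypothesis forcing the $F$-coefficients to be regular away from $Q_\infty$. The only (harmless) variation is in the spanning step, where you expand $f$ directly in $\{\by^{\bf i}\}$ after observing that the $\GL_m(\F_q)$ transition matrix preserves the integral-basis property, whereas the paper expands in $\{1,y,\dots,y^{m-1}\}$ and converts to the $\by^{\bf i}$'s afterwards.
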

 
\begin{proof} Firstly, for each $\by^{\bfi}\cdot t_j\in\cB_i$, we have
 \[\begin{split}
 \nu_{{P}_{\infty}}(\by^{\bfi}\cdot t_j)&=\sum_{k=0}^{r-1} i_k\cdot\nu_{P_{\infty}}(y_k)+\nu_{P_{\infty}}(t_j)\\
 &=\left(\sum_{k=0}^{r-1} i_k\cdot |G_k|\right)\cdot \nu_{P_{\infty}}(y)+  \nu_{Q_{\infty}}(t_j)\cdot e(P_{\infty}|Q_{\infty})\\
 &=-(i\cdot d+j\cdot m)\geq -\lambda,\end{split}\]
where the first equality follows from Equation~\eqref{eq:expansion}, the second equality follows from the Equation~\eqref{eq:royk}, and the third equality follows from Equation~\eqref{eq:exp} and the fact that $P_{\infty}$ is totally ramified in $E/F$. Then $\cB_i\subseteq \cL(\lambda P_{\infty})$ and hence $\cB\subseteq \cL(\lambda P_{\infty})$. Moreover, we claim that
\begin{equation}\label{eq:bcond}
 \nu_{{P}_{\infty}}(\by^{\bfi}\cdot t_j)\neq \nu_{{P}_{\infty}}(\by^{\bfi'}\cdot t_{j'}),\ \text{if}\ 0\leq i\neq i'\leq m-1\ \text{or}\ j\neq j'.\end{equation}
 Otherwise, one would have
 \[i\cdot d+j\cdot m=i'\cdot d+j'\cdot m,\ \text{i.e.,}\ (i-i')\cdot d=(j'-j)\cdot m.\]
As $\gcd(d, m)=1$ by assumption, then $m\mid (i-i')$  contradicts the assumption $0\leq i\neq i'\leq m-1$. Thus all elements in $\cB$ are $\F_q$-linearly independent.

Next, we will show that the set $\cB$ spans the whole Riemann-Roch space $\cL(\lambda P_{\infty})$. For any $f\in \cL(\lambda P_{\infty})$, assume $f=\sum_{i=0}^{m-1}\alpha_iy^i$, where $\alpha_i\in F$ for all $0\le i\le m-1$. Since $\{1, y, \dots, y^{m-1}\}$ is an integral basis of $E/F$ at $Q$ for any place $Q\neq Q_{\infty}$ of $F$, by \cite[Corollary 3.3.5]{sti09}, one has
\[\bigcap_{P\mid Q}\calO_{P}=\bigoplus\limits_{i=0}^{m-1} \calO_{Q}y^i,\ \forall\  Q\neq Q_{\infty}.\]
For any $Q\neq Q_{\infty}$, $f\in\bigcap_{P\mid Q}\calO_{{P}}$. By the above equation, the coefficients $\alpha_0,\dots,\alpha_{m-1}$ of $f$ are all in $\calO_{Q}$, namely, they have poles only at $Q_{\infty}$. Assume $\nu_{Q_\infty}(\alpha_i)=-a_i$ for $0\le i\le m-1$. Then $\alpha_i\in \cL(a_iQ_\infty)$. This means that $\Ga_i$ can be expressed as an $\F_q$-linear combination of elements in $\{t_j: 0\le j\le a_i, j\in H(Q_\infty)\}$. In particular, $a_i\in H(Q_\infty)$ is a pole number and the pole divisor of $\alpha_i$ in $F$ is $(\alpha_i)_\infty^F=a_iQ_\infty$. By the strict triangle inequality \cite[Lemma 1.1.11]{sti09}, we have 
\[\nu_{P_{\infty}} (f)=\min_{0\leq i\leq m-1}\big\{\nu_{P_{\infty}}\left(\alpha_iy^{i}\right)\big\}=-\max_{0\leq i\leq m-1} \big\{a_i\cdot m+i\cdot d\big\}\geq -\lambda.\]
That is to say, $a_i\cdot m+i\cdot d\leq \lambda$ for all $0\le i\le m-1$. Furthermore, $y^i$ can be written as an $\F_q$-linear combination of elements in $\{\by^{\bfi^\prime}: 0\le i^\prime \le i\}$ for any $0\le i\le m-1$, since $\by^{\bfi^\prime}$ is a polynomial in the variable $y$ of degree $i^\prime$. It follows that $\alpha_iy^i$ is an $\F_q$-linear combination of elements in $\cup_{i^\prime=0}^{i}\cB_{i^\prime}$ for each $0\le i\le m-1$.
Thus, $f=\sum_{i=0}^{m-1}\alpha_iy^i$ is an $\F_q$-linear combination of elements in $\cup_{i=0}^{m-1}\cB_i$. This completes the proof.
\end{proof}

\begin{remark}
If $F$ is the rational function field over $\F_q$, then there exists an element $x\in F$ such that $(x)_\infty=Q_\infty$. Hence, we have $F=\F_q(x)$ and any non-negative integer is a pole number of $Q_\infty$. For simplicity, $t_j$ can be chosen as $x^j$ for any $j\in H(Q_\infty)=\mathbb{N}$. 
\end{remark}

Let $E/F$ be an abelian extension of degree $m$ satisfying (P4) given in the above box. Let $Q_{\infty}=P_{\infty}\cap F$ be the place lying under $P_{\infty}$. Suppose there is a set $\cQ\subsetneq \PP_{F}$ satisfying all places in $\cQ$ are splitting completely in $E/F$. Assume $|\cQ|=s$ and $\cQ=\{Q_1,Q_2,\dots,Q_s\}$. Then the set $\cP=\{P\in\PP_E:\; P\mid Q_i,\; 1\le i\le s\}$ has cardinality $N=ms$. Label elements of $\cP$ by $\{P_1,P_2,\dots,P_N\}$.  Let $\lambda$  be a positive integer such that the dimension of $\cL(\lambda P_{\infty})$ is at most $N$. We now have two algebraic geometry codes $C(\cP,\Gl P_\infty)$ from the function field $E$ and $C(\cQ,\lfloor \lambda/m\rfloor Q_\infty)$ from the function field $F$, respectively. We will design an FMPE Algorithm~1, namely the encoding algorithm for $C(\cP,\Gl P_\infty)$ based on an FMPE algorithm for $C(\cQ,\lfloor \Gl/m\rfloor Q_\infty)$.

 \begin{algorithm}[!h]\label{alg: FMPE}
  \caption{ FMPE($f,\calP$).}
  \label{alg:Framwork}
  \begin{algorithmic}[1]
    \Require
     $f\in\cL(\lambda P_{\infty})$ and $\calP=\{P_{1},P_2,\dots,P_{N}\}$.
    \Ensure
      $\ev_{\calP}(f)=\left(f(P_{1}),\dots,f(P_{N})\right)\in\F_q^N$.
    \State
Write $f=\sum_{i=0}^{m-1}a_i(t)\by^{\bfi}$.
    \State For $j=0,\dots,r$, let $\calP^{(j)}=\calP\cap E_j$. In particular, $\calP^{(0)}=\calP$.
    \For {$j=0$ to $r-1$}
     \State $j\gets j+1$
     \State Write $f=f_{j,0}+f_{j,1}y_{j-1}+\dots+f_{j,p_j-1}y_{j-1}^{p_j-1}$.
    \If {$j=r$}
\State Compute $\ev_{\calP^{(j)}}(f_{j,k})$ for $k=0,\dots, p_j-1$.
\Else
     \If {$j<r$}
     \State Call FMPE$\left(f_{j,k},\calP^{(j)}\right)$ to compute $\ev_{\calP^{(j)}}(f_{j,k})$ for $k=0,\dots, p_j-1$.
     \EndIf
     \EndIf
     \\
     \Return  $\ev_{\calP^{(j-1)}}(f)[P]=f_{j,0}(P)+f_{j,1}(P)y_{j-1}(P)+\dots+f_{j,p_j-1}(P)y_{j-1}^{p_j-1}(P)$ for every $P\in \calP^{(j-1)}$.
     \EndFor
  \end{algorithmic}
\end{algorithm}

Let us show the correctness of the FMPE Algorithm~1 and analyze its complexity.
\begin{theorem}\label{thm:main} Assume that the extension $E/F$ satisfies the four properties given in the box (P4)  and $m$ has prime factorization $\prod_{i=1}^rp_i$ with $p_i\le B$. Let the notions $Q_{\infty}$, $\cQ$, $\cP$, and $\lambda$ be given as above (here we assume all places in $\cQ$ are splitting completely in $E/F$).
If the MPE of the algebraic geometry code $C(\cQ,\lfloor \lambda/m\rfloor Q_\infty)$ can be done in time $O(B s\log s)$, then the encoding algorithm for $C(\cP,\Gl P_\infty)$, namely, the FMPE Algorithm 1, can run in time $O(B\cdot N\log N)$.
\end{theorem}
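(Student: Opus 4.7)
The plan is to establish correctness by induction on the tower length $r$, and to bound the running time by unrolling the standard divide-and-conquer recurrence. My starting observation is that the basis $\cB$ of $\cL(\lambda P_\infty)$ supplied by Lemma \ref{lem:bases} is already in the exact shape Algorithm 1 needs: each basis element $\by^{\bfi}\cdot t_j=y_{r-1}^{i_{r-1}}\cdots y_0^{i_0}\cdot t_j$ carries the exponent of $y_{\ell-1}$ in its $(\ell-1)$-th index $i_{\ell-1}\in\{0,\dots,p_{\ell}-1\}$. Expanding the input $f\in\cL(\lambda P_\infty)$ in $\cB$ and collecting terms by the value of $i_0$ yields the decomposition
\[f=\sum_{k=0}^{p_1-1} f_{1,k}\,y_0^{k},\qquad f_{1,k}\in E_1,\]
essentially for free, and it is the unique such decomposition because $E_0=E_1(y_0)$ has degree $p_1$ over $E_1$. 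Inside the $k$-th recursive call, regrouping the surviving basis terms by $i_1$ decomposes $f_{1,k}$ into powers of $y_1$ with coefficients in $E_2$, and so on, so that after $r$ layers the residual functions $f_{r,k}$ lie in $F$, where the assumed MPE routine takes over at the base case.

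To justify that the base routine actually applies, I must track pole orders down the tower. By the strict triangle inequality, exactly as used in the proof of Lemma \ref{lem:bases} (the values $\nu_{P_\infty}(y_0^{k}f_{1,k})$ for different $k$ land in distinct residue classes modulo $p_1$ thanks to $\gcd(d,m)=1$), the identity
\[-\lambda\le \nu_{P_\infty}(y_0^{k}f_{1,k})=-kd+p_1\,\nu_{P_\infty^{(1)}}(f_{1,k})\]
forces $f_{1,k}\in\cL(\lfloor\lambda/p_1\rfloor P_\infty^{(1)})$, and iterating gives $f_{r,k}\in\cL(\lfloor\lambda/m\rfloor Q_\infty)$, which is precisely the Riemann-Roch space on $F$ whose MPE on $\cQ$ costs $O(Bs\log s)$ by hypothesis. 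The combining step is correct because $f_{j,k}\in E_j$ implies that the residue of $f_{j,k}$ at any $P\in\cP^{(j-1)}$ equals its residue at the contraction $P\cap E_j\in\cP^{(j)}$, so for each $P\in\cP^{(j-1)}$,
\[f(P)=\sum_{k=0}^{p_j-1} f_{j,k}(P\cap E_j)\cdot y_{j-1}(P)^{k},\]
which is exactly the combining formula in line 11 of Algorithm 1. The values $y_{j-1}(P)$ can be precomputed once at the start.

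For the complexity, let $\fC(\lambda,N)$ denote the running time. Algorithm 1 spawns $p_1$ recursive calls on inputs of size $N/p_1$ and then combines the outputs in $O(p_1 N)$ operations, giving the recurrence
\[\fC(\lambda,N)=p_1\,\fC\!\left(\lfloor\lambda/p_1\rfloor,\; N/p_1\right)+O(p_1 N).\]
Unrolling through all $r$ levels, the combining costs telescope into $(p_1+\cdots+p_r)N=O(rBN)=O(BN\log m)$ (using $r\le\log_2 m$ and $p_j\le B$), while the $m$ base cases together contribute $m\cdot O(Bs\log s)=O(BN\log s)$. Adding these yields $\fC(\lambda,N)=O(BN\log N)$, as claimed.

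The main obstacle I anticipate is bookkeeping rather than any single technical step: I must verify that the sub-tower $E_1\supsetneq\cdots\supsetneq E_r=F$ inherits every hypothesis the argument invokes at the top level—namely, that $P_\infty^{(1)}$ remains totally ramified in the rest of the tower, that the places of $\cP^{(1)}$ still split completely, and that the coprimality $\gcd(-\nu_{P_\infty^{(j-1)}}(y_{j-1}),p_j)=1$ persists so that the strict-triangle-inequality argument is available at every level. These all follow from the original (P4) conditions by direct transitivity (for instance, $\nu_{P_\infty^{(j-1)}}(y_{j-1})=-d$ at every level, since $e(P_\infty\mid P_\infty^{(j-1)})=|G_{j-1}|$ and $\nu_{P_\infty}(y_{j-1})=-|G_{j-1}|\,d$), but checking them cleanly at each level is what makes the induction go through.
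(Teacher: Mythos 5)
Your proposal is correct and follows essentially the same route as the paper's proof: the same grouping of the Lemma \ref{lem:bases} basis by the digit $i_0$ to obtain $f=\sum_k f_{1,k}y_0^k$, the same strict-triangle-inequality argument showing $f_{1,k}\in\cL(\lfloor\lambda/p_1\rfloor P_\infty^{(1)})$, the same observation that evaluation factors through the contraction $P\cap E_1$, and the same recurrence $\fC(\lambda,N)=p_1\fC(\lfloor\lambda/p_1\rfloor,N/p_1)+O(p_1N)$ unrolled to $O(BN\log N)$. Your closing bookkeeping checks (that $\nu_{P_\infty^{(j-1)}}(y_{j-1})=-d$ at every level and that total ramification, complete splitting, and coprimality persist down the tower) are correct and in fact make explicit some points the paper leaves implicit.
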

\begin{proof}
Our assumptions  ensure that  $\cL(\lambda P_{\infty})$ has an $\F_q$-basis given as in Lemma~\ref{lem:bases}. Thus, any function $f\in\cL(\lambda P_{\infty})$ can be written as
\[f=\sum_{i=0}^{m-1}c_i(\bt)\cdot y_{r-1}^{i_{r-1}}\cdots y_1^{i_1}y_0^{i_0},\]
where $i=\sum_{k=0}^{r-1}i_k\cdot|G_k|$ and $c_i(\bt)=\sum_{j\in H(Q_\infty)} c_{i,j}t_j$ is a finite sum with $c_{i,j}\in\F_q$ satisfying   
\begin{equation}
-\nu_{Q_{\infty}}\left(c_i(\bt)\right)\cdot m+i\cdot d\leq \lambda\ \text{for\ each}\ 0\le i\le m-1.
\end{equation}

By combining all the terms of $f$ which have the same divisor $y_0^{i_0}$ for each $i_0=0,1,\ldots,p_1-1$, which can be done in at most $O(\ell(\lambda P_{\infty}))\leq O(N)$ steps, we assume
\begin{equation}\label{eq:recs}
f=f_0(\bt, y_1)+f_1(\bt, y_1)\cdot y+\cdots+f_{p_1-1}(\bt, y_1)y^{p_1-1},
\end{equation}
where the symbol $\bt$ denotes the multivariate of elements in $\{t_j: j\in H(Q_{\infty}),\ j\leq \lambda/m\}$ and $f_0, f_1, \dots, f_{p_1-1}\in \F_q[\bt, y_1]$.
Let $P_{\infty}^{(1)}$ be the unique place of $E_1$ lying under $P_{\infty}$. Then, for every $k=0, 1,\dots, p_1-1$,
\[\begin{split}
\nu_{P_{\infty}^{(1)}}\left(f_k(\bt, y_1)\right)&=\nu_{P_{\infty}}\left(f_k(\bt, y_1)\right)/p_1\geq \frac 1{p_1}\cdot\nu_{P_{\infty}}\left(f_k(\bt, y_1)y^k\right)\\
&\geq \frac 1{p_1}\cdot\min_{i:\ i_0=k}\{ \nu_{P_{\infty}}(c_i(\bt)\by^{\bfi})\}\\
& {\geq \frac 1{p_1}\cdot\min_{i}\{ \nu_{P_{\infty}}(c_i(\bt)\by^{\bfi})=  \nu_{P_{\infty}}(f)/p_1 \geq -\lambda/p_1.}
\end{split}
 \]
Thus, $f_0, f_1,\dots, f_{p_1-1}\in \cL\left(\lfloor \lambda/{p_1}\rfloor P_{\infty}^{(1)}\right)$. 

For any $Q_j\in \cQ$, let $\cP_j=\{P_{j,1}, P_{j,2}, \cdots, P_{j,m}\}$ be the set of $m$ rational places in $E$ lying above $Q_j$. Then
\[\cP=\cup_{j=1}^s \cP_j=\{P_{1,1}, \dots, P_{1,m},\dots,P_{s,1}, \dots, P_{s,m}\}.\]
Note that, for each $P_{j, \ell}\in\calP_j$, the evaluations $f_{k}(P_{j, \ell})=f_k(P_{j, \ell}\cap E_1)$ for all $f_k$.
Let
\[\calP^{(1)}_j=\{P_{j, \ell}\cap E_1:\; \text{for\ all}\ P_{j, \ell}\in\calP\}.\] 
Then $|\calP^{(1)}_j|=m/p_1$ and the set $\calP^{(1)}=\cup_{j=1}^{s}\calP^{(1)}_j$ has cardinality $s\cdot m/p_1$. By equation~\eqref{eq:recs}, the evaluation $\ev_{\cP}(f)$ of $f$ at $\cP$ can be reduced to the evaluations $\ev_{\calP^{(1)}}(f_0), \ev_{\calP^{(1)}}(f_1), \dots,$ $\ev_{\calP^{(1)}}(f_{p_1-1})$ of $f_0, f_1, \dots, f_{p_1-1}$ at $\calP^{(1)}$ and then get $f(P)$ by 
\begin{equation}\label{eq:fff}
 {f(P)=f_0(P)+f_1(P)y(P)+\ldots +f_{p_1-1}(P)y(P)^{p_1-1}.}
\end{equation}
This reduction corresponds to the steps $5-13$ in our FMPE Algorithm~1. We continue the reduction in this fashion till to $E_r=F$. 
Let $P_{\infty}^{(i)}=P_\infty\cap E_i$ for any $0\le i\le r$. Particularly, $P_{\infty}^{(r)}=Q_\infty$.
At the last step in the recursion, we are facing computations of $s$-point evaluations of $m$ functions in $\cL(\lfloor \lambda/m\rfloor Q_{\infty})$ at the set $\cQ\subseteq \mathbb{P}_F$. By assumption, we can directly compute the $\ev_{\cQ}(g)$ for any $g\in \cL(\lfloor \lambda/m\rfloor Q_{\infty})$. Therefore, we have proved the correctness of Algorithm~1 FMPE.

Next, we analyze the complexity. Denote the complexity (i.e., the total number of operations in $\F_q$) of evaluating an $f\in\cL(\lambda P_{\infty})$ at the set $\cP$ by $\fC(\lambda, N)$. By the above discussion, the evaluation $\ev_{\cP}(f)$ can be reduced to the $p_1$ evaluations of functions in $\cL\left(\lfloor \lambda/{p_1}\rfloor P_{\infty}^{(1)}\right)$ of length $N/{p_1}$.  {Then, for every $P\in\calP$, $f(P)$ is a combination of these values by Equation \eqref{eq:fff} which will cost $p_1$ operations in $\F_q$}. Thus, $\fC(\lambda, N)$ satisfies the following recursive formula
\begin{equation}\label{eq:time}
\fC(\lambda, N)=p_1\fC(\lfloor \lambda/{p_1}\rfloor, N/{p_1})+p_1\cdot O(N).
\end{equation}
At the last step of the recursive reduction, this formula \eqref{eq:time} leads the total running time equal to
\[\begin{split}
\fC(\lambda, N)&=m\cdot \fC(\lfloor \lambda/{m}\rfloor, s)+(p_1+\cdots +p_r)O(N)\\
&=m\cdot O(B\cdot s\log s)+O(B\cdot N\log m)=O(B\cdot N\log N).
\end{split}\]
where the second equality follows from the assumption that encoding of  $C(\cQ,\lfloor \lambda/m\rfloor Q_\infty)$ runs in $O(B s\log s)$ operations.
\end{proof}

\begin{remark}[Inverse FMPE]
 {
In the univariate case, it is known that both the FFT and inverse FFT of length $s$ can be implemented in $O(s\log s)$ operations of the underlying finite field $\F_q$. Actually, under the assumptions in Theorem~\ref{thm:main}, we can show that there exists an inverse FMPE algorithm that can compute an $f\in\cL(\lambda P_{\infty})$ from the MPE $\ev_{\calP}(f)\in\F_q^N$ in $O(N\log N)$ time. In \cite{BRS20}, the inverse FMPE is also called the unencoding problem. Thus, our results also improve complexity of the unecoding problem in \cite{BRS20}. We follow the notations defined in the proof of Theorem~\ref{thm:main}. Let $I(N)$ denote the complexity of inverse FMPE of length $N$. For any rational place $P^{(1)}\in\calP^{(1)}=\calP\cap E_1$, assume $P_1, P_2, \ldots, P_{p_1}$ are $p_1$ rational places lying over $P^{(1)}$. By the Equation~\eqref{eq:recs}, we have
\begin{equation}\label{eq:irec}
\left(\begin{matrix} f_0(P^{(1)})\\
f_1(P^{(1)})\\
\vdots\\
f_{p_1-1}(P^{(1)})\end{matrix}\right)=\left(\begin{matrix}
1            &            y(P_{1})    & \cdots  & y^{p_1-1}(P_{1})\\ 
1 & y(P_{2}) & \cdots  & y^{p_1-1}(P_{2}) \\
\vdots       & \vdots       & \cdots   & \vdots \\
1 &   y(P_{p_{1}}) & \cdots &  y^{p_1-1}(P_{p_{1}})\end{matrix}\right)^{-1}\cdot \left(\begin{matrix} f(P_{1})\\
f(P_{2})\\
\vdots\\
f(P_{p_1})\end{matrix}\right)\end{equation}
Thus, we can first compute the MPEs of $f_0, f_1,\ldots, f_{p_1-1}$ at $\calP^{(1)}$ from the above equation, which will cost $p_1O(N)$ operations. Then we can reduce the inverse FMPE of $\ev_{\calP}(f)$ to $p_1$ inverse FMPEs of $\ev_{\calP^{(1)}}(f_0)$, $\ev_{\calP^{(1)}}(f_1)$, $\ldots$, $\ev_{\calP^{(1)}}(f_{p_1-1})$. Finally, we get $f$ by Equation~\eqref{eq:recs} which will cost at most $O(N)$ operations. Therefore, $I(N)$ satisfies the following recursive formula
\[I(N)=p_1I(N/{p_1})+p_1O(N)=O(N\log N).\]}
\end{remark}

\section{Instantiations}\label{sec:inst}
In this Section, we instantiate the encoding algorithm for the general function field extension given in  Section 3  by some special extensions $E/F$ with $F=\F_q(x)$ and $E=\F_q(x,y)$. In this case, the FMPE of length $s$ for $F$ is just the fast Fourier transform of length $s$ over the finite field $\F_q$. By employing the known FFT from \cite{cooley1965, lin2014novel} with quasi-linear $O(s\log s)$ operations in $\F_q$, we can obtain an encoding algorithm of algebraic geometry code from $E$ with quasi-linear time $O(N\log N)$.
In the following, we will consider the two most common types of  Galois extension $E/F$, namely the Kummer extensions and Artin-Schreier extensions, as well as the mixed extension of Kummer and Artin-Schreier.

\subsection{Kummer extensions}
Let $\F_q$ be a finite field and $F=\F_q(x)$ be the rational function field with full constant field $\F_q$. Assume $\F_q$ contains a primitive $m$-th root of unity, denoted by $\omega_m$, for some positive integer $m$ satisfying $\gcd(m, q)=1$. Suppose $u(x)\in\F_q[x]$ is a square-free polynomial with degree $d=\deg\left(u(x)\right)$ relatively prime to $m$. Define $\varphi(T)=T^m-u(x)$. Let
\begin{equation}\label{eq:ke}
E=F(y)\ \text{with}\ \varphi(y)=0.
\end{equation}
Then $E$ is a Kummer extension of $F$ and we have the following proposition.

\begin{proposition}\label{prop:kp}
\begin{enumerate}
\item[{\rm (i)}] The extension $E/F$ is Galois of degree $m=[E: F]$ and its Galois group $\Gal(E/F)$ is isomorphic to the cyclic group $\langle \omega_m \rangle$; more precisely, any $F$-automorphism of $E$ is given by $\sigma(y)=\omega y$ for some $\omega\in \langle \omega_m \rangle$.
\item[{\rm (ii)}] The infinity rational place ${Q}_{\infty}$ of $F$, i.e., the pole of $x$, is totally ramified in $E/F$. Let ${P}_{\infty}$ be the unique place in $E$ lying above ${Q}_{\infty}$. Then the discrete valuations of $x, y$ at ${P}_{\infty}$ are
\[\nu_{{P}_{\infty}}(x)=-m,\ \nu_{{P}_{\infty}}(y)=-d.\]
\item[{\rm (iii)}] For any place ${Q}\neq {Q}_{\infty}$, $\{1, y,\dots, y^{m-1}\}$ is an integral basis of $E/F$ at ${Q}$.
\item[{\rm (iv)}] Assume ${Q}$ is a rational place of $F$ such that the evaluation of $u(x)$ at ${Q}$ is an $m$-th power, i.e., $u({Q})=\alpha^m$ for some $\alpha\in\F_q^*$. Then $Q$ splits completely in $E/F$.

\end{enumerate}
\end{proposition}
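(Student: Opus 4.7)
The plan is to verify the four parts as standard consequences of Kummer theory for $E=F(y)$ with $y^m=u(x)$, using crucially that $\gcd(m,q)=1$, $\gcd(m,d)=1$, $u(x)$ is square-free, and $\omega_m\in\F_q$. For (i), I would first show that $\varphi(T)=T^m-u(x)$ is irreducible over $F=\F_q(x)$ via the standard Kummer irreducibility criterion: since $\omega_m\in F$, $T^m-u$ is reducible iff $u$ is a $p$-th power in $F$ for some prime $p\mid m$, and square-freeness of $u(x)$ precludes any factorization $u(x)=g(x)^p$ with $p\geq 2$. Hence $[E:F]=m$, and since all $m$ conjugate roots $\omega_m^i y$ already lie in $E$, the extension is Galois with cyclic group generated by $y\mapsto \omega_m y$. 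For (ii), I would exploit $y^m=u(x)$ together with $\nu_{Q_\infty}(u(x))=-d$: for any place $P$ of $E$ above $Q_\infty$, $m\,\nu_P(y)=-d\,e(P\mid Q_\infty)$, so $\gcd(m,d)=1$ forces $m\mid e(P\mid Q_\infty)\leq m$. Hence $e(P\mid Q_\infty)=m$, giving total ramification, a unique $P_\infty$, and $\nu_{P_\infty}(y)=-d$, $\nu_{P_\infty}(x)=-m$.

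For (iii), the key observation is that ramification at every finite place $Q\neq Q_\infty$ is tame. Since $u(x)$ is square-free, $\nu_Q(u)\in\{0,1\}$ according as $u$ is nonzero or vanishes at $Q$. If $\nu_Q(u)=0$ then $Q$ is unramified, and if $\nu_Q(u)=1$ then $\gcd(\nu_Q(u),m)=1$ combined with $\gcd(m,q)=1$ forces total tame ramification. In either case the standard integral-basis result for tame Kummer extensions gives that the integral closure of $\calO_Q$ in $E$ equals $\bigoplus_{i=0}^{m-1}\calO_Q\,y^i$. This step is the main technical hinge of the proposition: one must verify that no denominators beyond the powers of $y$ are needed at either the unramified or the tamely totally ramified finite places, and square-freeness of $u$ is precisely what guarantees this local triviality of the different outside $P_\infty$.

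Finally, for (iv), the hypothesis $u(Q)=\alpha^m$ with $\alpha\in\F_q^*$ means $\varphi(T)$ reduces modulo $Q$ to $T^m-\alpha^m=\prod_{i=0}^{m-1}(T-\omega_m^i\alpha)$, which splits into $m$ distinct linear factors in the residue field $\F_q$ since $\gcd(m,q)=1$. Combining this factorization with the integral basis from (iii), Kummer's theorem on decomposition of places then translates the factorization of $\varphi\bmod Q$ into $Q$ splitting as a product of $m$ distinct rational places of $E$, each with ramification index and residue degree $1$; that is, $Q$ splits completely in $E/F$.
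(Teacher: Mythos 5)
Your proposal is correct and follows essentially the same route as the paper: parts (i), (ii), and (iv) are the standard Kummer-extension facts (which the paper simply cites from Stichtenoth, Proposition 3.7.3 and Corollary 3.3.8), and for (iii) you make the same two-case split $\nu_Q(u)\in\{0,1\}$ forced by square-freeness, concluding via tameness exactly as the paper does by checking $d(P\mid Q)=\nu_P(\varphi'(y))$ in each case. The only cosmetic difference is that you argue (i) and (ii) directly (irreducibility from the Kummer criterion, total ramification from $m\nu_P(y)=-d\,e(P\mid Q_\infty)$ and $\gcd(m,d)=1$) rather than citing them.
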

 \begin{proof}
See \cite[Proposition 3.7.3]{sti09} for the proof of (i) and (ii), while (iv) directly follows from \cite[Corollary 3.3.8 (c)]{sti09}. We only need to prove (iii).

It is clear that $\{1, y,\dots, y^{m-1}\}$ is a basis of $E$ as a vector space over $F$. 
For any $Q\neq Q_{\infty}$, let ${P}\in \PP_E$ be a place of $E$ lying above ${Q}$.

Case I: if $\nu_{{Q}}\left(u(x)\right)=0$, then $\nu_{{P}} (y)=0=\nu_{{P}}(\varphi'(y))$. Thus, by \cite[Corollary 3.5.11]{sti09}, $\{1,y,\dots, y^{m-1}\}$ is an integral basis for $E/F$ at ${Q}$.

Case II: if $\nu_{{Q}}\left(u(x)\right)>0$, then $\nu_{Q}\left(u(x)\right)=1$ as $u(x)$ is square-free. In this case, ${Q}$ is totally ramified in $E/F$ and $\nu_{{P}}(y)=1$. Thus, the different exponent \[d({P} \mid {Q})=e({P} \mid {Q})-1=m-1=\nu_{{P}}(\varphi'(y)).\]
 By \cite[Theorem 3.5.10]{sti09}, $\{1,y,\dots, y^{m-1}\}$ is an integral basis for $E/F$ at ${Q}$.
This completes the proof.
 \end{proof}

In conclusion, the Kummer extension $E/F$ of degree $m$ satisfies all four properties (P4). Next, we consider $m\mid (q-1)$ to be smooth. Since $q-1$ is smooth with high probability for odd $q$, we assume $2\mid (q-1)$ and $m$ has prime factorization $m=2\prod_{i=2}^rp_i$, i.e., $p_1=2$.
Consequently, for any positive integer $\Gl<N=ms$, one can firstly construct a basis of $\cL(\Gl P_{\infty})$ by Lemma~\ref{lem:bases} and then perform FMPE of any function in $\cL(\Gl P_{\infty})$. To be more precise, we can describe an $\F_q$-basis $\cB$ of $\cL(\Gl P_{\infty})$ explicitly. As $G=\Gal(E/F)$ is cyclic and the conjugates of $y$ are
\[\{\sigma(y):\ \sigma\in G\}=\{\omega\cdot y:\ \omega^m=1\}.\] We follow notations defined as in Section~\ref {sec:FMPE}. After computations, we have
 \begin{enumerate}
 \item $G_i\cong \langle \omega_m^{m/(2p_2\cdots p_i)} \rangle$, $i=1,\dots,r$;
 \item $E_i:=E^{G_i}=F(y_i)$, where $y_i:=-\prod_{\sigma\in G_i} \sigma(-y)=y^{|G_i|}$. Thus, $y_i=y_{i-1}^{p_i}$ for all $i$.
 \item The basis $\cB$ of $\cL(\Gl P_{\infty})$ is
 \begin{equation}\label{eq:kbases}
 \begin{split}
 \cB&=\{y_0^{i_0}y_1^{i_1}\cdots y_{r-1}^{i_{r-1}}x^j \mid i\cdot d+j\cdot m\leq \Gl,\ \text{and}\ 0\leq i\leq m-1,\ j\geq 0\}.\\
 &=\{y^ix^j \mid i\cdot d+j\cdot m\leq \Gl,\ \text{and}\ 0\leq i\leq m-1,\ j\geq 0\}.
 \end{split}
 \end{equation}
 \end{enumerate}


\begin{corollary}\label{cor:k}
Let $E/\F_q(x)$ be a Kummer extension of degree $m\mid (q-1)$ defined by equation~\eqref{eq:ke} and let ${P}_{\infty}$ be the unique pole  of $x$ in $\mathbb{P}_E$.  Let $\cQ=\{\Ga\in \F_q:\; x-\Ga \ \mbox{completely  splits  in $E/\F_q(x)$}\}$ and $s=|\cQ|$. Let $\cP$ be the set of all rational places in $E$ lying above places in $\cQ$. If $q-1$ is $O(1)$-smooth and  {the MPE of any polynomial in $\F_q[x]_{<s}$ at $\cQ$ can be done in $O(s\log s)$ operations}, then encoding $C(\cP,\lambda{P}_{\infty})$ with $\Gl<N=ms$ costs $O( N\log N)$ operations of $\F_q$.
\end{corollary}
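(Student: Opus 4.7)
The plan is to view this as a direct specialization of Theorem~\ref{thm:main}, so the task reduces to checking that the Kummer extension $E/F$ with $F=\F_q(x)$ satisfies all the hypotheses of that theorem and then identifying the base MPE that the recursion bottoms out at.

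First, I would verify the four properties (P4) for $E/F$ using Proposition~\ref{prop:kp}. Property (i) follows from part (i) of that proposition, since $\Gal(E/F)\cong \langle\omega_m\rangle$ is cyclic, hence abelian, of degree $m$. Property (ii) also follows from part (i): every $F$-automorphism has the form $\sigma(y)=\omega y$ with $\omega\in\F_q^*$, which is $\F_q$-linear in $y$. Property (iii) is exactly part (ii) of the proposition: $Q_\infty$, the pole of $x$, is totally ramified in $E/F$ with unique place $P_\infty$ above it. For property (iv), note that by part (ii) we have $\nu_{P_\infty}(y)=-d$ with $\gcd(d,m)=1$ by the hypothesis of the corollary (recall $u(x)$ is the square-free polynomial of degree $d$ appearing in \eqref{eq:ke} and $\gcd(m,d)=1$ was assumed in the setup of the Kummer extension); and part (iii) of the proposition says $\{1,y,\dots,y^{m-1}\}$ is an integral basis of $E/F$ at every $Q\neq Q_\infty$. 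Thus all four properties hold.

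Next I would identify the base case. By construction of $\cQ$, every place in $\cQ$ splits completely in $E/F$, and the places in $\cP$ are exactly the places of $E$ lying above $\cQ$, so $|\cP|=ms=N$. The base MPE is then the evaluation map $\mathrm{ev}_\cQ\colon \cL_F(\lfloor\lambda/m\rfloor Q_\infty)\to \F_q^s$. Since $F=\F_q(x)$ and $Q_\infty$ is the pole of $x$, this Riemann-Roch space is just $\F_q[x]_{\le \lfloor\lambda/m\rfloor}$, and the condition $\lambda<ms$ gives $\lfloor\lambda/m\rfloor<s$, so these are polynomials in $\F_q[x]_{<s}$. By hypothesis, MPE of such polynomials at $\cQ$ can be done in $O(s\log s)$ operations.

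Finally, I would invoke Theorem~\ref{thm:main}. The hypothesis that $q-1$ is $O(1)$-smooth, together with $m\mid (q-1)$, ensures that the prime factorization $m=\prod_{i=1}^r p_i$ has all $p_i$ bounded by some absolute constant $B$. Plugging into Theorem~\ref{thm:main} with this $B=O(1)$ and with the base-case complexity $O(s\log s)=O(B\,s\log s)$ just verified, the FMPE Algorithm~1 encodes $C(\cP,\lambda P_\infty)$ in time $O(B\cdot N\log N)=O(N\log N)$, which is the claim. There is no real obstacle here beyond bookkeeping; the only subtlety is confirming $\gcd(d,m)=1$ so that Lemma~\ref{lem:bases} applies and the basis \eqref{eq:kbases} is indeed an $\F_q$-basis of $\cL(\lambda P_\infty)$, but this is built into the definition of the Kummer extension used in the statement.
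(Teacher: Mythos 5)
Your proposal is correct and follows the same route the paper intends: verify the four properties (P4) for the Kummer extension via Proposition~\ref{prop:kp}, note that the base case $C(\cQ,\lfloor\lambda/m\rfloor Q_\infty)$ is MPE of polynomials in $\F_q[x]_{<s}$ at $\cQ$ (covered by the corollary's hypothesis), and conclude by Theorem~\ref{thm:main} with $B=O(1)$ from the smoothness of $q-1$. Your explicit check that $\gcd(d,m)=1$ is built into the Kummer setup, so Lemma~\ref{lem:bases} applies, matches the paper's preceding discussion exactly.
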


\subsection{Artin-Schreier extensions}
Let $\F_q$ be a finite field with characteristic $p$. Let $W$ be an $\F_p$-subspace of $\F_q$ of dimension $r$. Assume $\{1=w_1, w_2, \dots, w_r\}$ is an $\F_p$-basis of $W$. Define
\[L(T)=\prod_{w\in W}(T-w).\]
Then $L(T)\in \F_q[T]$ is a $p$-linearized polynomial of degree $p^r$, namely, $L(a\alpha+b\beta)=aL(\alpha)+bL(\beta)$ for any $a,b\in\F_p$ and $\alpha, \beta\in \F_q$. Assume $u(x)\in \F_q[x]$ is a polynomial of degree $d$ coprime to $p$. Let $F=\F_q(x)$ be the rational function field. Assume \[E=F(y)\ \text{with}\ L(y)=u(x).\] Then $E/F$ is an Artin-Schreier extension that has the following properties:

\begin{proposition}\label{prop:at}
\begin{itemize}
\item[{\rm (i)}] The extension $E/F$ is Galois of degree $p^r=[E: F]$ and its Galois group $\Gal(E/F)$ is isomorphic to $(\ZZ/p\ZZ)^r$; more precisely, any $F$-automorphism of $E$ is given by $\sigma(y)= y+w$ for some $w\in W$.
\item[{\rm (ii)}] The pole ${Q}_{\infty}$ of $x$ in $F$ is totally ramified in $E/F$. Let ${P}_{\infty}$ be the unique place in $E$ lying above ${Q}_{\infty}$. Then the discrete valuations of $x, y$ at ${P}_{\infty}$ are
\[\nu_{{P}_{\infty}}(x)=-p^r,\ \nu_{{P}_{\infty}}(y)=-d.\]
\item[{\rm (iii)}] For any place ${Q}\neq {Q}_{\infty}$, $\{1, y,\dots, y^{p^r-1}\}$ is an integral basis of  $E/F$ at ${Q}$.
\item[{\rm (iv)}] Assume ${P}$ is a rational place of $F$ such that $L(T)=u({P})$ is solvable in $\F_q$. Then ${P}$ splits completely in $E/F$.
\end{itemize}
\end{proposition}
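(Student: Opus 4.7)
My plan is to mirror the structure of the Kummer case (Proposition~\ref{prop:kp}), with the multiplicative action $y\mapsto\omega y$ replaced by the additive translation action $y\mapsto y+w$ afforded by the $p$-linearity of $L$. Write $L(T)=c_0T+c_1T^p+\cdots+T^{p^r}$ in standard $p$-linearized form, where $c_0=\prod_{w\in W\setminus\{0\}}(-w)\in\F_q^\ast$, so that $L'(T)\equiv c_0$. Set $\varphi(T):=L(T)-u(x)\in F[T]$; this has degree $p^r$ and will be the candidate minimal polynomial of $y$.

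I will begin with (ii), from which (i) falls out for free. For any place $P$ of $E$ lying over $Q_\infty$, the element $y$ must have a pole at $P$ (else $L(y)$ would be integral at $P$ while $u(x)$ is not), and the defining equation $L(y)=u(x)$ yields
\[ p^r\,\nu_P(y)=\nu_P(L(y))=\nu_P(u(x))=-d\cdot e(P\mid Q_\infty), \]
using that the leading monomial $y^{p^r}$ dominates the expansion of $L(y)$. Combined with $\gcd(p^r,d)=1$ and $e(P\mid Q_\infty)\le[E:F]\le p^r$, this forces $e(P\mid Q_\infty)=[E:F]=p^r$. Hence $[E:F]=p^r$, $Q_\infty$ is totally ramified, $P_\infty$ is unique, and $\nu_{P_\infty}(x)=-p^r$, $\nu_{P_\infty}(y)=-d$.

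With $[E:F]=p^r$ in hand, for (i) I note that $L(y+w)=L(y)+L(w)=u(x)$ for every $w\in W$, so $\{y+w:w\in W\}$ exhausts the $p^r$ roots of $\varphi$ in $E=F(y)$. Thus $\varphi$ splits in $E$, making $E/F$ Galois, and the translations $\sigma_w:y\mapsto y+w$ give $p^r$ distinct $F$-automorphisms, so $\Gal(E/F)\cong W\cong(\Z/p\Z)^r$; separability is automatic from $\varphi'(y)=c_0\in\F_q^\ast$. For (iii), fix $Q\ne Q_\infty$ and any $P\mid Q$: since $u(x)\in\calO_Q$, the same identity $p^r\nu_P(y)=\nu_P(u(x))\ge 0$ forces $\nu_P(y)\ge 0$, so $y$ is integral at $P$, and because $\varphi'(y)=c_0$ is a unit at every such $P$, the integral basis criterion \cite[Corollary~3.5.11]{sti09} delivers that $\{1,y,\dots,y^{p^r-1}\}$ is an integral basis of $E/F$ at $Q$ (and as a byproduct $d(P\mid Q)=0$, so $Q$ is unramified). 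For (iv), if $L(\alpha)=u(P)$ for some $\alpha\in\F_q$, then $p$-linearity gives $L(\alpha+w)=u(P)$ for all $w\in W$, so $\varphi(T)$ reduces modulo $P$ to the product of $p^r$ distinct linear factors $\prod_{w\in W}(T-\alpha-w)$ in $\F_q[T]$; Kummer's theorem \cite[Theorem~3.3.7]{sti09}, applied via the integral basis from (iii), then gives complete splitting of $P$ in $E/F$.

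I do not anticipate any serious obstacle: the only point requiring genuine care is the valuation identity $\nu_P(L(y))=p^r\nu_P(y)$ when $\nu_P(y)<0$, which rests on the strict dominance of the leading monomial $y^{p^r}$ over the lower $p$-power terms in a $p$-linearized polynomial. Once this is cleanly recorded, the argument is structurally identical to the Kummer case and proceeds along the steps above.
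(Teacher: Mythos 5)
Your proof is correct and follows essentially the same route as the paper: the paper simply cites \cite[Proposition 3.7.10]{sti09} and \cite[Corollary 3.3.8]{sti09} for (i), (ii) and (iv), and for (iii) gives exactly your argument that $y$ is integral away from $Q_\infty$ and $d(P\mid Q)=\nu_P(L'(y))=0$, so \cite[Corollary 3.5.11]{sti09} applies. Your expansions of the cited facts --- the valuation identity $p^r\nu_P(y)=-d\cdot e(P\mid Q_\infty)$ together with $\gcd(p^r,d)=1$ forcing total ramification and $[E:F]=p^r$, the translations $y\mapsto y+w$ exhausting the roots of $\varphi$, and Kummer's theorem for complete splitting --- are precisely the standard arguments underlying those references.
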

\begin{proof}
Refer to \cite[Proposition 3.7.10]{sti09} and \cite[Corollary 3.3.8]{sti09} for the proof of (i), (ii), and (iv), respectively. The proof of (iii) is similar to the Kummer extension case. By the same arguments, we can see that $y$ has a unique pole at ${P}_{\infty}$. Moreover, for any place ${Q}\neq {Q}_{\infty}$ and any place ${P}$ in $E$ lying above ${Q}$, the different $d({P}\mid {Q} )=\nu_{{P}}(L'(y))=0$. Thus, by \cite[Corollary 3.5.11]{sti09}, $\{1, y,\dots, y^{p^r-1}\}$ is an integral basis of  $E/F$ at ${Q}$.
\end{proof}

Therefore, the Artin-Schreier extension is another class of function fields that satisfy the four properties (P4). In this case, the structure of the Galois group $G=\Gal(E/F)\cong W$ and the conjugates of $y$ are also clear. We can describe all subgroups of $G_i$, $E_i=E^{G_i}$, and $\cB$ explicitly.

Assume $W_i=(\ZZ/p\ZZ)w_1+(\ZZ/p\ZZ)w_2+\dots+(\ZZ/p\ZZ)w_i$ is an $\F_p$-subspace of $W$ of dimension $i$ for $1\leq i\leq r$. Then
\begin{enumerate}
\item $G_i:=G_{W_i}=\{\sigma_w\ |\ \sigma_w(y)=y+w,\ \text{for\ all}\ w\in W_i\}$, $i=1,\dots, r$.
\item $E_i=F(y_i)$ and \[y_i=\prod_{w\in W_i} (y+w)=\ell_i(y),\] where $\ell_i(T)\in\F_q[T]$ is a linearized polynomial of degree $p^i$ and $\ell_i(T)$ has all roots in $W_i\subset\F_q$. Moreover, by $W_i=W_{i-1}+(\ZZ/p\ZZ)w_i$, we have
 \[\begin{split}
 y_i&=\prod_{w\in W_i} (y+w)=\prod_{w\in W_{i-1}}\prod_{a=0}^{p-1}\sigma_{w+aw_i}(y)=\prod_{w\in W_{i-1}}\sigma_{w}\left(\prod_{a=0}^{p-1}\sigma_{aw_i}(y)\right)\\
 &=\prod_{w\in W_{i-1}}\sigma_{w}( y^{p}-yw_i^{p-1})=\ell_{i-1}(y^p-w_{i}^{p-1}y)=y_{i-1}^p-\ell_{i-1}(w_i^{p-1})\cdot y_{i-1}.\end{split}\]
 Namely, each $y_i$ is a linearized polynomial of $y_{i-1}$ of degree $p$ for all $i$.
\item A basis of $\cL(m{P}_{\infty})$ is
\begin{equation}\label{eq:atbases}
\cB= \left\{y^{i_0}\ell_1(y)^{i_1}\cdots \ell_{r-1}(y)^{i_{r-1}}\cdot x^j \mid i\cdot d+j\cdot p^r\leq m,\,0\leq i\leq p^r-1,\ j\geq 0\right\},
\end{equation}
where $(i_0, i_1,\ldots, i_{r-1})$ is defined by the $p$-adic expansion $\sum_{k=0}^{r-1} i_kp^k$ of $i\in[0, p^{r}-1]$.
\end{enumerate}


\begin{corollary}\label{cor:at}
Let $E/F$ be the Artin-Schreier extension of degree $p^r$ and ${P}_{\infty}$ be the unique place of $E$ lying over the pole of $x$. Suppose $\cQ=\{\Ga\in \F_q\mid x-\Ga \ \text{completely\ splits\ in}\ E/F\}$ and $|\cQ|=s$. Let $\cP$ be the set of all rational places in $E$ lying above the places of $\cQ$.   {If the MPE of any polynomial in $\F_q[x]_{<s}$ at $\cQ$ can be done in $O(s\log s)$ operations}, then encoding of $C(\cP,\lambda{P}_{\infty})$ with $\Gl<N=ms$ runs in $O(p\cdot N\log N)$ operations of $\F_q$.
\end{corollary}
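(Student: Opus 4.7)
The plan is to show that Corollary~\ref{cor:at} is an immediate consequence of Theorem~\ref{thm:main}: it suffices to check that the Artin-Schreier extension $E/F$ and the sets $\cQ$, $\cP$ verify every hypothesis of that theorem, with smoothness parameter $B = p$.

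First I would verify the four properties (P4), drawing on Proposition~\ref{prop:at}. By Proposition~\ref{prop:at}(i), $E/F$ is Galois of degree $m = p^r$ with Galois group isomorphic to $(\ZZ/p\ZZ)^r$, which is abelian, and every $F$-automorphism has the form $\sigma(y) = y + w$ with $w \in W \subset \F_q$; these give properties (i) and (ii) of (P4) with $a = 1 \in \F_q^*$ and $b = w \in \F_q$. Proposition~\ref{prop:at}(ii) supplies (iii), namely that $P_\infty$ is totally ramified, and also the valuation $\nu_{P_\infty}(y) = -d$. Since $\gcd(d,p) = 1$ by construction of $u(x)$, we get $\gcd(m, \nu_{P_\infty}(y)) = \gcd(p^r, d) = 1$. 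The remaining half of (iv), that $\{1, y, \dots, y^{m-1}\}$ is an integral basis of $E/F$ at every place $Q \neq Q_\infty$, is Proposition~\ref{prop:at}(iii).

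Next I would record the remaining ingredients required by Theorem~\ref{thm:main}. The prime factorization $m = p \cdot p \cdots p$ is trivially $p$-smooth, so the bound in Theorem~\ref{thm:main} may be taken to be $B = p$. By the very definition of $\cQ$, every place in $\cQ$ splits completely in $E/F$, and hence $|\cP| = m \cdot s = N$. For the base MPE on $F$: here $F = \F_q(x)$ and $Q_\infty$ is the pole of $x$, so $\cL(\lfloor \lambda/m \rfloor Q_\infty)$ is precisely the space of polynomials in $\F_q[x]$ of degree at most $\lfloor \lambda/m \rfloor$. From $\lambda < N = ms$ we get $\lfloor \lambda/m \rfloor < s$, so the hypothesis of the corollary ensures that the MPE at $\cQ$ of the code $C(\cQ, \lfloor \lambda/m \rfloor Q_\infty)$ is computable in $O(s \log s)$ operations, which is within the $O(B \cdot s \log s)$ budget demanded by Theorem~\ref{thm:main}.

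With (P4) verified and the base case in hand, Theorem~\ref{thm:main} applies directly and yields encoding complexity $O(B \cdot N \log N) = O(p \cdot N \log N)$, which is exactly the conclusion of the corollary. There is no genuine obstacle; the only subtlety worth flagging is bookkeeping: because every prime appearing in the factorization of $m$ equals $p$, the $p \cdot O(N)$ combine cost at each of the $r = \log_p m$ recursive levels of Algorithm~\ref{alg:Framwork} accumulates to the factor of $p$ that appears in the final bound, distinguishing it from the Kummer case of Corollary~\ref{cor:k} where different small primes can occur.
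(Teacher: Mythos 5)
Your proposal is correct and matches the paper's intended argument: the paper gives no separate proof of Corollary~\ref{cor:at}, but the surrounding text (Proposition~\ref{prop:at} verifying (P4), the subgroup chain $W_i$ and fields $E_i=F(y_i)$, and the basis \eqref{eq:atbases}) is precisely the setup for invoking Theorem~\ref{thm:main} with $B=p$, exactly as you do. Your checks that $\gcd(p^r,d)=1$, that $\lfloor\lambda/m\rfloor<s$ puts the base case in $\F_q[x]_{<s}$, and that the base MPE hypothesis is within the $O(Bs\log s)$ budget are all the verification that is needed.
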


\subsection{Mixed extensions}
The assumption about $G=\Gal(E/F)$ is abelian in the box (P4) in Section~\ref{sec:FMPE} is not necessary. Actually, we only need there is a chain of subgroups of $G$ for $G$ with smooth order. Then, by the other properties in the box (P4), we can also apply the divide-and-conquer method. In this subsection, we present a non-abelian extension $E/\F_q(x)$ constructed from the affine linear group which ensures the other properties in (P4) hold.

Let $F=\F_q(x)$ be the rational function field. We denote by  $\Aut(F/\F_q)$ the automorphism group of $F$ over $\F_q$, i.e.,
\begin{equation}
\Aut(F/\F_q)=\{\sigma: F\rightarrow F \mid \sigma  \mbox{ is an } \F_q\mbox{-automorphism of } F\}.
\end{equation}
It is clear that an automorphism $\sigma\in \Aut(F/\F_q)$ is uniquely determined by $\sigma(x)$.
It is well known that every automorphism $\sigma\in \Aut(F/\F_q)$ is given by
\begin{equation}\label{abcd}
\sigma(x)=\frac{ax+b}{cx+d}
\end{equation}
for some constant $a,b,c,d\in\F_q$ with $ad-bc\neq0$ (see \cite{hir08}).
Denote by  $\GL_2(q)$ the general linear group of $2\times 2$ invertible matrices over $\F_q$.
Thus, every matrix $A=\left(\begin{array}{cc}a&b\\ c&d\end{array}\right)\in \GL_2(q)$ induces an automorphism of $F$ given by \eqref{abcd}.
Two matrices of $\GL_2(q)$ induce the same automorphism of $F$ if and only if they belong to the same coset of $Z(\GL_2(q))$, where $Z(\GL_2(q))$ stands for the center $\{aI_2:\; a\in\F_q^*\}$ of $\GL_2(q)$.  This implies that $\Aut(F/\F_q)$ is isomorphic to the projective linear group $\PGL_2(q):=\GL_2(q)/Z(\GL_2(q))$. Thus, we can identify $\Aut(F/\F_q)$ with $\PGL_2(q)$.

Consider the affine linear subgroup $\AGL_2(q)$ of $\PGL_2(q)$
\begin{equation}\label{eq:x3}
\AGL_2(q):=\left\{\left(\begin{array}{cc}a&b\\ 0&1\end{array}\right):\; a\in\F_q^*,b\in\F_q\right\}.
\end{equation}
Every element $A=\left(\begin{array}{cc}a&b\\ 0&1\end{array}\right)\in \AGL_2(q)$ defines an affine automorphism
$\sigma(x)=ax+b.$  We identify each $A$ with $\sigma$ in this way.

Let $G$ be a subgroup of $\AGL_2(q)$ and let $A(y)=\prod_{\sigma\in G}\sigma(y)$. If $A(y)=u(x)$ is absolutely irreducible and separable over $\F_q(x)$ for some polynomial $u(x)\in\F_q[x]$, then $E=\F_q(x, y)$ is a Galois extension over $\F_q(x)$ with $\Gal(E/F)=G$. Moreover, the conjugate roots of $y$ are exactly $\sigma(y)=ay+b$ for all $\sigma\in G$. Actually, the Kummer extension corresponds to $G\leq \left\{\left(\begin{array}{cc}a&1\\ 0&1\end{array}\right):\; a\in\F_q^*\right\}$ , while the Artin-Schreier extension corresponds to $G\leq \left\{\left(\begin{array}{cc}1&b\\ 0&1\end{array}\right):\; b\in\F_q\right\}$. In the following, we consider the mixed case.

Assume $\F_q=\F_{\kappa^r}$, where $\kappa$ is a power of prime $p$. Let $T\leq \F_{\kappa}^*$ be a multiplicative subgroup of order $e$ and $W\leq \F_{\kappa^r}$ be an additive subgroup of order $\kappa^w$, respectively. Let $G$ be the group of semidirect product of $T$ and $W$,
\[G=\left\{\left(\begin{array}{cc}a&b\\ 0&1\end{array}\right)\mid a\in T, b\in W\right\}=T\ltimes W.\]
Then $G$ is a non-abelian subgroup of $\AGL_2(\F_q)$ of order $m=\kappa^we$ and $\{1\}\ltimes W$ is a normal subgroup of $G$. Assume $e=\prod_{i=1}^t p_i$. Then $T$ has a subgroup chain: $\{1\}=T_0\leq T_1\ldots\leq T_t=T$, where $[T_i: T_{i-1}]=p_i$ for $i=1,\ldots, t$. Moreover, $W$ also has a subgroup chain: $\{0\}=W_0 \trianglelefteq W_1 \trianglelefteq \ldots  \trianglelefteq W_w=W$, where $[W_i: W_{i-1}]=\kappa$ for $i=1,\ldots, w$. Therefore, we can construct a normal subgroup chain of $G$:
\[\{0\}=G_0\trianglelefteq G_1\ldots \trianglelefteq G_w\trianglelefteq G_{w+1}\trianglelefteq \ldots \trianglelefteq G_{t+w}=G,\]
where $G_i=\{1\}\ltimes W_i$ for $0\leq i\leq w$ and $G_{w+j}=T_j\ltimes W$ for $0\leq j\leq t$. Then, by the Galois theory, we can construct a subfield tower of $E/F$:
\[E=E^{G_0}\supsetneq E^{G_1}\supsetneq \ldots\supsetneq E^{G}=F.\]
Moreover, if $\gcd\left(\deg(u(x)), m\right)=1$, then the pole place of $x$ in $\PP_{F}$ is totally ramified. We present the following result without proof as it is similar to the ones given in Subsections 4.1 and 4.2.

\begin{theorem}\label{thm:mix}
Let $\F_q=\F_{\kappa^r}$ be a finite field with $\kappa$ a prime power. Assume $G$ is a semidirect product subgroup of $\AGL_2(q)$ of order $m=\kappa^we$, where $e\mid (\kappa-1)$. Let $A(y)=\prod_{\sigma\in G}\sigma(y)$. Suppose $u(x)\in\F_q[x]$ is a square-free polynomial satisfying:
\begin{itemize}
\item[(i)] $A(y)-u(x)$ is absolutely irreducible and separable over $\F_q(x)$;
\item[(ii)] $\gcd(\deg(A(y),\deg(u(x))=1$. \end{itemize}
Let \[E=\F_q(x,y),\ \text{with}\ A(y)=u(x).\]
Then the pole place $Q_{\infty}\in \PP_{\F_q(x)}$ of $x$ is totally ramified in $E/F_q(x)$. Let $P_{\infty}$ denote the unique place in $\PP_E$ lying above $Q_{\infty}$. Suppose there are $s=\Theta(q)$ rational places in $\PP_{\F_q(x)}\setminus\{Q_{\infty}\}$ splitting completely in $E/\F_q(x)$. Let $\cP\subset \PP_E$ be the set of all rational places lying above these $s$ places. Then for any positive integer $\lambda$ less than $N=m\cdot s$, encoding of $C(\cP,\lambda{P}_{\infty})$ runs in $O(\kappa\cdot N\log N)$ operations of $\F_q$.
\end{theorem}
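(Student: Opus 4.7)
The plan is to reduce this to Theorem~\ref{thm:main}, whose divide-and-conquer argument only uses an ascending chain of \emph{normal} subgroups $\{1\}=G_0\trianglelefteq G_1\trianglelefteq\cdots\trianglelefteq G_{t+w}=G$ with prime quotients (together with conjugates of $y$ being $\F_q$-linear in $y$, total ramification of $P_\infty$, and an explicit integral basis at every finite place). The chain produced in the paragraph preceding the theorem via the semidirect product $T\ltimes W$ has all quotients equal to either $\kappa$ or to a prime divisor of $\kappa-1$, hence all $\le \kappa$. So the strategy is to check that $E/F$ satisfies the four properties in box (P4) — with the abelian hypothesis in (i) weakened to the existence of this normal chain — and then invoke the recursion of Theorem~\ref{thm:main} verbatim.

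I verify (P4) as follows. For (i), by condition~(i) of the theorem $A(y)-u(x)$ is absolutely irreducible and separable, so $[E:F]=m$; and because every conjugate $\sigma(y)=ay+b$ of $y$ lies in $E$, the extension is Galois with group $G$. For (ii), the same description of conjugates gives exactly the $\F_q$-affine action required by property (ii). For (iii), since $A$ is monic of degree $m$ we have $\nu_P(A(y))=m\,\nu_P(y)$ for any place $P\mid Q_\infty$ where $\nu_P(y)<0$; combined with $\nu_P(u(x))=-e(P\mid Q_\infty)\deg u$ this forces $m\,\nu_P(y)=-e(P\mid Q_\infty)\deg u$, and $\gcd(m,\deg u)=1$ from condition (ii) of the theorem implies $m\mid e(P\mid Q_\infty)$, hence $e(P\mid Q_\infty)=m$, giving total ramification of $Q_\infty$. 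For (iv), the previous identity yields $\nu_{P_\infty}(y)=-\deg u$, so $\gcd(m,\nu_{P_\infty}(y))=\gcd(m,\deg u)=1$. The integral basis statement at any place $Q\ne Q_\infty$ of $F$ follows in the now-standard way: $u(x)$ has poles only at $Q_\infty$, so $A(y)-u(x)\in\calO_Q[y]$, and $\nu_P(A'(y))=0$ at every $P\mid Q$ (one argues as in the Kummer/Artin–Schreier cases of Propositions~\ref{prop:kp} and \ref{prop:at}, splitting into the unramified and ramified subcases at $Q$), so \cite[Corollary~3.5.11]{sti09} gives the integral basis $\{1,y,\ldots,y^{m-1}\}$ at $Q$.

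With (P4) in hand, the normal subgroup chain from the $T\ltimes W$ decomposition produces, via the Galois correspondence, the tower
\[
E=E^{G_0}\supsetneq E^{G_1}\supsetneq\cdots\supsetneq E^{G_{t+w}}=F,
\]
each step of prime degree $p_i\le \kappa$. An $\F_q$-basis of $\cL(\lambda P_\infty)$ built step-by-step from this tower (as in Lemma~\ref{lem:bases}) lets us apply the recursion of Theorem~\ref{thm:main}: any $f\in\cL(\lambda P_\infty)$ expands as $\sum_{k=0}^{p_1-1}f_k\,y^k$ with $f_k\in\cL(\lfloor\lambda/p_1\rfloor P_\infty^{(1)})$, and one level of the divide-and-conquer reduces an $N$-length MPE on $E$ to $p_1$ MPEs of length $N/p_1$ on $E^{G_1}$ plus $p_1 O(N)$ recombination work. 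Iterating down the tower bottoms out at an MPE of $m$ functions in $\cL(\lfloor\lambda/m\rfloor Q_\infty)$ at the $s=\Theta(q)$ completely-splitting places of $F=\F_q(x)$, i.e.\ a univariate polynomial MPE of length $s$. The recursion gives
\[
\fC(\lambda,N)\le m\cdot\fC(\lfloor\lambda/m\rfloor,s)+(p_1+\cdots+p_{t+w})O(N)\le m\cdot O(s\log s)+O(\kappa\cdot N\log m),
\]
where the assumption that the $s$-length univariate MPE over $\F_q$ runs in $O(s\log s)$ (via the FFT of \cite{cooley1965,lin2014novel}, which is implicit in this section's instantiation setting) yields the claimed $O(\kappa\cdot N\log N)$ bound.

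The main obstacle is the first step: the framework of Section~3 was written for abelian $E/F$, whereas here $G=T\ltimes W$ is generally non-abelian. The honest fix is to observe that the proof of Theorem~\ref{thm:main} only uses the \emph{tower of intermediate fields} $F=E_r\subsetneq\cdots\subsetneq E_0=E$ together with each step being Galois of prime degree, which is supplied by a normal chain in $G$ rather than by abelianness of $G$. Once this mild generalization is recorded, every subsequent ingredient (the explicit $\by^{\bfi}\cdot t_j$ basis, the recombination identity~\eqref{eq:fff}, and the recursion~\eqref{eq:time}) transfers without change, and the proof concludes exactly as in the abelian case.
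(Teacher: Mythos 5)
The paper gives no proof of this theorem, stating only that it is ``similar to the ones given in Subsections 4.1 and 4.2''; your argument --- verify the four properties (P4) with the abelian hypothesis relaxed to the normal chain coming from $T\ltimes W$, then run the recursion of Theorem~\ref{thm:main} with $B=\kappa$ --- is exactly that intended route, and it is essentially correct. One small inaccuracy: your claim that $\nu_P(A'(y))=0$ at \emph{every} finite place fails at the zeros of $u(x)$, where the Kummer-type part of the extension ramifies and $\nu_P(A'(y))>0$; there one must instead match $d(P\mid Q)=e(P\mid Q)-1=\nu_P(A'(y))$ and apply \cite[Theorem~3.5.10]{sti09}, exactly as in the ramified subcase of Proposition~\ref{prop:kp}(iii), after which the conclusion stands.
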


\section{Examples for codes from plane curves}
In this section, we illustrate our encoding algorithms by using algebraic geometry codes from Hermitian and norm-trace curves. 

\begin{example}{(Hermitian curve)}\label{ex:hc}
Let $\kappa$ be a power of a prime $p$. Let $\F_q=\F_{\kappa^2}$ be a finite field. Consider the Hermitian curve $\H$ over $\F_q$ defined by
\[\H(x,y):=y^{\kappa}+y-x^{\kappa+1}.\]
Then $\H(x,y)$ is absolutely irreducible. Let $E=\F_q(x, y)=\mathrm{Frac}(\F_q[x,y]/\H(x,y))$ be the fraction field. On the one hand, $E$ can be viewed as a Kummer extension over $\F_q(y)$ by extending $\F_q(y)$ with a variable $x$ satisfying the relation $\H(x,y)=0$. On the other hand, $E$ can also be viewed as an Artin-Schreier extension over $\F_q(x)$ by extending $\F_q(x)$ with a variable $y$ satisfying the relation $\H(x,y)=0$. This brings more flexibility of $E$ to be an FMPE-friendly function field.

Consider the one-point Riemann-Roch space $\cL(\Gl{P}_{\infty})$, where $\Gl$ is a positive integer and ${P}_{\infty}$ is the unique pole of both $x$ and $y$. The one-point Hermitian code is the image of the multipoint evaluation of $\cL(\Gl{P}_{\infty})$ at a set $\calP$, denoted by $C(\calP, \Gl{P}_{\infty})$. According to whether $q$ or $q-1$ is smooth, we discuss encoding of $C(\calP, \Gl{P}_{\infty})$ in two different cases.

\textbf{Case 1: $\F_q$ has constant characteristic $p$. } Consider the Artin-Schreier extension $\F_q(x,y)/\F_q(x)$. For any $\alpha\in \F_q$, the equation $y^{\kappa}+y=\alpha^{\kappa+1}$ has $\kappa$ solutions in $\F_q$. Thus there are $q$ rational places of $\F_q(x)$ that are splitting completely in $E/\F_q(x)$. Let $\calP$ denote the set of rational places of $E$ lying above them. Then $\calP=N(E)\setminus\{{P}_{\infty}\}$ and $N=|\calP|=\kappa^3$. Since the MPE of functions in $\F_q[x]_{<q}$ at $\F_q$ can be done in $O(q\log q)$\cite{lin2014novel}, by Corollary~\ref{cor:at}, the Hermitian code $C(\calP, \Gl{P}_{\infty})$ with $\Gl = \kappa(q-1)=\kappa^3-\kappa$ is $O(N\log N)$ encodable.


\textbf{Case 2: $q-1$ is smooth.}  As $(\kappa+1)\mid (q-1)$ is also smooth, we take the Kummer extension $\F_q(x,y)/\F_q(y)$. Let $\Ker(\Tr_{q/\kappa})=\{\alpha\in \F_q \mid \alpha^{\kappa}+\alpha=0\}$ and $\cQ=\F_q\setminus\Ker(\Tr_{q/\kappa})$.  Then $|\cQ|=q-\kappa$. By the multiplicative FFT \cite{Pollard71}, the MPE of any function in $\F_q[x]_{<q-\kappa}$ at $\cQ$ costs $O((q-1)\log (q-1))=O((q-\kappa)\log (q-\kappa))$ operations in $\F_q$. For any $\beta\in\cQ$, the equation $x^{\kappa+1}=\beta^{\kappa}+\beta$ has $\kappa+1$ solutions in $\F_q$. Thus all rational places in $\cQ$ are splitting completely in $E/\F_q(y)$. Let $\calP$ be the set of all rational places lying above $\cQ$. Then $N=|\calP|=\kappa^3-\kappa$. By Corollary~\ref{cor:k}, the Hermitian code $C(\calP, \lambda{P}_{\infty})$ with $\lambda=(\kappa+1)(q-\kappa)=\kappa^3-2\kappa-1$ is $O(N\log N)$ encodable.

\end{example}

\begin{example}\label{ex:5.2}{(Norm-trace  curves)}
Many other examples are also included in our class of curves presented in Section 3. For instance, elliptic curves, hyperelliptic curves, coverings of the Hermitian curves, norm-trace curves,  etc. Let us discuss norm-trace curves in detail. 

Let $q=\kappa^r$ and let $W$ be the kernel of the trace function from $\F_q$ to $\F_\kappa$, i.e., $W$ is the $\F_\kappa$-space $\{\Ga\in\F_q:\; \Ga+\Ga^\kappa+\cdots+\Ga^{\kappa^{r-1}}=0\}$. Let $V$ is an $\F_\kappa$-subspace of $W$ and let $e$ be a divisor of $\frac{\kappa^r-1}{\kappa-1}$. A norm-trace curve is defined by either
\begin{equation}\label{e:x}
\cX:\quad \prod_{\Ga\in V}(y-\Ga)=x^{\kappa^{r-1}+\cdots+\kappa+1},
\end{equation}
or
\begin{equation}\label{e:x}
{\mathcal Y}:\quad \prod_{\Ga\in W}(y-\Ga)=y+y^\kappa+\cdots+y^{\kappa^{r-1}}=x^{e}.
\end{equation}

The curve $\cX$ has genus $\frac12(\kappa^{r-1}+\cdots+\kappa)(|V|-1)$. Furthermore, for every $\Gb\in\F_q$, $\Gb^{\kappa^{r-1}+\cdots+\kappa+1}$ is an element of $\F_\kappa$. Thus, there are $|V|$ rational points of $\cX$ lying over $x-\Gb$. Together with the unique common pole $P_\infty$ of $x$ and $y$, $\cX$ has $q|V|+1$ rational points in total. Let $\cP$ be the set of all rational points of $\cX$ except for $P_\infty$, then the code $C(\cP,\Gl P_\infty)$ has length $N=q|V|$. In this case, we consider the Artin-Schreier extension $\F_q(x,y)/\F_q(x)$. If $q$ has a constant characteristic, then encoding of the algebraic geometry code $C(\cP,\Gl P_\infty)$ with $\lambda<q|V|$ costs $O(N\log N)$ operations.

The curve ${\mathcal Y}$ has genus $\frac12(\kappa^{r-1}-1)(e-1)$. Furthermore, for every $\Gb\in\F_q$, $\Gb+\Gb^\kappa+\cdots+\Gb^{\kappa^{r-1}}$ is an element of $\F_\kappa$. Thus, there are $e$ rational points of ${\mathcal Y}$ lying over $y-\Gb$ for $\Gb\not\in W$ and only one point of ${\mathcal Y}$ lying over $y-\Gb$ for $\Gb\in W$. Together with the unique common pole $P_\infty$ of $x$ and $y$, ${\mathcal Y}$ has $(q-\kappa^{r-1})e+1+\kappa^{r-1}$ rational points in total. Let $\cP$ be the set of all rational points of ${\mathcal Y}$ lying over $y-\Gb$ for $\Gb\not\in W$, then the code $C(\cP,\Gl P_\infty)$ has length $N=(q-\kappa^{r-1})e$. In this case, we consider the Kummer extension $\F_q(x,y)/\F_q(y)$. If $q-1$ is smooth, then encoding of the algebraic geometry code $C(\cP,\Gl P_\infty)$ with $\lambda<e(q-\kappa^{r-1})$ runs in $O(N\log N)$ operations.
\end{example}

\section{Example for AG codes from a non-plane curve}

So far, all examples discussed in Section 5 are algebraic geometry codes based on plane curves. As our general encoding algorithm given in Section 3 works for non-plane curves as well, we present an example of encoding of algebraic geometry codes based on non-plane curves in this section. 

The non-plane curve that we are discussing in this section is the Hermitian tower given in below. The main technique for this example is to construct a tower of function fields via the Galois theory for each FMPE-friendly function field. 

Let $\Gk$ be a prime power and let $q=\Gk^2$.  The Hermitian tower of function fields was first introduced and discussed in \cite{Shen93}. Let $F_1=\F_q(x_1)$.
Then the Hermitian tower is defined by the following recursive equations
\begin{equation}\label{eq:x2}x_{i+1}^{\Gk}+x_{i+1}=x_i^{{\Gk}+1},\quad i=1,2,\dots,n-1.\end{equation}
Put  $F_i=\F_q(x_1,x_2,\dots,x_{i})$ for $i\ge 2$. We fix an integer $n$ satisfying $2\le n\le \Gk/2$.

Let us discuss the number of rational places of the function field $F_n$. 
The pole $\Pin$ of $x_1$ in $F_1$ is totally ramified in the extension $F_n/F_1$. Let $P_\infty^{(n)}$ be the unique place of $F_n$ lying over $\Pin$. 
The other $\Gk^{n+1}$ rational places come from the rational places lying over the unique zero $P_\Ga$ of $x_1-\Ga$ for each $\Ga\in\F_q$. Note that for every $\Ga\in\F_q$, $P_\Ga$ splits completely in $F_n$, i.e., there are $\Gk^{n-1}$ rational places lying over $P_\Ga$.
Intuitively, one can think of the rational places of $F_n$ (besides $P_\infty^{(n)}$) as being given by $n$-tuples $(\Ga_1,\Ga_2,\dots,\Ga_n)\in \F_q^n$ that satisfy $\Ga_{i+1}^{\Gk}+\Ga_{i+1}=\Ga_i^{{\Gk}+1}$ for $i=1,2,\dots,n-1$. For each value of $\Ga \in \F_q$, there are precisely $\Gk$ solutions to $\beta \in \F_q$ satisfying $\beta^{\Gk} + \beta = \Ga^{{\Gk}+1}$, so the number of such $n$-tuples is ${\Gk}^{n+1}$ ($q={\Gk}^2$ choices for $\Ga_1$, and then ${\Gk}$ choices for each successive $\Ga_i$, $2 \le i  \le n$). 

The genus $g_n$ of the function field $F_n$  is given by
\begin{equation}
\label{eq:x4}
g_n=\frac12\left(\sum_{i=1}^{n-1}{\Gk}^n\left(1+\frac1{\Gk}\right)^{i-1}-({\Gk}+1)^{n-1}+1\right)\le \frac {{\Gk}^n}{2}\sum_{i=1}^n{n\choose i}\frac1{{\Gk}^{i-1}} \le \frac{n {\Gk}^n}{2} \sum_{i=1}^n \left(\frac{n}{{\Gk}}\right)^{i-1} \le n {\Gk}^n
\end{equation}
where the second and the last inequalities used ${n\choose i}\le n^i$ and ${\Gk} \ge 2n$, respectively, while the first inequality used the following
\[g_n\le \frac{{\Gk}^n}2\sum_{i=1}^{n-1}\left(1+\frac1{\Gk}\right)^{i-1}=\frac{{\Gk}^n}2\times\frac{\left(1+\frac1{\Gk}\right)^{n-1}-1}{\left(1+\frac1{\Gk}\right)-1}=\frac{{\Gk}^{n+1}}2 \sum_{i=1}^{n-1}{n-1\choose i}\frac{1}{{\Gk}^i}\le \frac{{\Gk}^{n}}2 \sum_{i=1}^{n-1}{n\choose i}\frac{1}{{\Gk}^{i-1}}.\]

For an integer $\Gl>0$, the Riemann-Roch space $\cL(\Gl P_\infty^{(n)})$  of $F_n$ has a nice structure that fits well for our encoding algorithm. More precisely speaking,
a basis of $\cL(\Gl P_\infty^{(n)})$ over $\F_q$ can be explicitly constructed as follows
\begin{equation}\label{eq:x3}\left\{x_1^{j_1}\cdots x_n^{j_n}:\; (j_1,\dots,j_n)\in\mathbb{N}^n,\ \sum_{i=1}^nj_i\Gk^{n-i}({\Gk}+1)^{i-1}\le \Gl, 0\le j_2,\cdots,j_n\le \kappa-1\right\}
\end{equation}
from \cite[Proposition 5]{Shen93}.

The above recursive equations are defined in terms of Artin-Schreier extensions.  Let us now define the same tower in terms of Kummer extensions. 
\begin{equation}\label{eq:x7}y_{i+1}^{{\Gk}+1}= y_{i}^{\Gk}+y_{i},\quad i=1,2,\dots,n-1.\end{equation}
Put $E_i=\F_q(y_1,y_2,\dots,y_i)$. Apparently, $E_i$ and $\F_i$ are $\F_q$-isomorphic (in fact, $E_i$ can be obtained from $F_i$ by substitution of variables $x_1,x_2,\dots,x_i$ with $y_1,y_2,\dots,y_i$). Thus, they have the same genus and number of rational places.

As studied for the Hermitian curves,  we discuss encoding of $C(\calP^{(n)}, \Gl_n{P}_{\infty}^{(n)})$ in two different cases.

\textbf{Case 1: $\F_q$ has constant characteristic $p$. } Let us consider the Hermitian tower $\{F_i\}$ of function fields given in \eqref{eq:x2} and keep the same notations.  Let $P_{\infty}^{(i)}$ be the unique place of $F_i$ lying over the pole of $x_1$.
Now let us focus on the Riemann-Roch space $\cL(\lambda_nP_{\infty}^{(n)})$ for an integer $n\ge 2$ and $\lambda_n\in \mathbb{N}$.
Let $\mathcal{P}^{(i)}$ denote the set of all rational places of $F_i$ except for the place lying over the pole of $x_1$. Let $N_i$ denote the size of $\cP^{(i)}$. Then $N_i=\Gk^{i+1}$ for all $1\le i\le n$.

Consider the Artin-Schreier extension $F_i(x_{i+1})/F_i$. It is a Galois extension with the Galois group $\Gal(F_i(x_{i+1})/F_i)\simeq\F_p^r$, where $\Gk=p^r$. Hence, we get a Galois tower with each extension degree equal to $p$ according to Section~\ref{sec:FMPE}. Then all four properties in the box (P4) are satisfied. First of all, it is easy to see the properties (i)-(iii) in (P4) are satisfied. To verify that the last property of (P4) is also satisfied, we can just apply \cite[Theorem 3.5.10]{sti09}, which is the same as in the proof of Proposition~\ref{prop:at}(iii). Thus, encoding of the algebraic geometry code $C(\calP^{(n)}, \Gl_n{P}_{\infty}^{(n)})$ can be reduced encoding of the algebraic geometry code $C(\calP^{(n-1)}, \lfloor\Gl_n/\Gk\rfloor {P}_{\infty}^{(n-1)})$ under a basis constructed from Lemma~\ref{lem:bases}. By recursive reduction, the encoding of $C(\calP^{(n)}, \Gl_n{P}_{\infty}^{(n)})$ is eventually reduced to the encoding of an algebraic geometry code defined over the rational function field $F_1=\F_q(x_1)$ which 
is a Reed-Solomon code. Let $\fC(\lambda_n, N_n)$ denote the number of operations in $\F_q$ required for encoding of a message. Then we have the following recursive formula:

\[\begin{split}
\fC(\lambda_n, N_n)&=\Gk\cdot \fC(\lfloor \lambda_n/{\Gk}\rfloor, N_{n-1})+(p+\cdots +p)O(N_n)\\
&=\Gk \cdot \fC(\lfloor \lambda_n/{\Gk}\rfloor, N_{n-1}) +O(p\cdot N_n\cdot\log \Gk).
\end{split}\]

Inductively to the last step, we obtain 
\begin{eqnarray*}
\fC(\lambda_n, N_n)&=&\Gk \cdot \fC(\lfloor \lambda_n/{\Gk}\rfloor, N_{n-1}) +O(p\cdot N_n\cdot\log \Gk)\\
&=&\Gk\left(\Gk \cdot \fC(\lfloor \lambda_n/{\Gk^2}\rfloor, N_{n-2}) +O(p\cdot N_{n-1}\cdot\log \Gk)\right)+O(p\cdot N_n\cdot\log \Gk)\\
&\cdots&\cdots\\
&=&\Gk^{n-1}\fC(\lfloor \lambda_n/{\Gk^{n-1}}\rfloor, N_{1})+O(p\log\Gk(\Gk^{n-2}N_2+\cdots+\Gk N_{n-1}+N_n))\\
&=&\Gk^{n-1}O(p\cdot N_1\log N_1)+O(pN_n\log\Gk^n )\\
&=&O(N_n\log N_n ),
\end{eqnarray*}
where we use the two facts: (i) $\fC(\lfloor \lambda_n/{\Gk^{n-1}}\rfloor, N_{1})$ is in fact the encoding complexity of a Reed-Solomon code and hence $\fC(\lfloor \lambda_n/{\Gk^{n-1}}\rfloor, N_{1})=O(p\cdot N_1\log N_1)$ \cite{lin2014novel}; (ii) $N_i=\Gk^{i+1}=N_n/{\Gk^{n-i}}$. The characteristic $p$ disappears in the last equality due to the fact that $p$ is constant.  

The last thing worth mentioning is that the required basis of $\cL\left(\Gl_n {P}_{\infty}^{(n)}\right)$ to perform the FMPE can be constructed recursively by Lemma~\ref{lem:bases}, namely, it consists of multivariate polynomials $\bx_{1}^{\bj_1}\bx_{2}^{\bj_{2}}\cdots \bx_n^{\bj_n}$, where $j_1\geq 0, 0\leq j_2,\ldots, j_n<\kappa$, the bold $\bj_1,\ldots,\bj_n$ stand for the vectors from the $p$-adic expansions of $j_1,\ldots,j_n$, respectively, and $\bx_i$ is the product of linearized polynomial of $x_i$ defined as in Equation \eqref{eq:atbases}. Then every message $\bm\in \F_q^{k_n}$ with $k_n=\dim\left(\cL(\Gl_n {P}_{\infty}^{(n)})\right)$ is mapped to a function $f_{\bm}\in\cL\left(\Gl_n {P}_{\infty}^{(n)}\right)$. Hence $\bm$ is encoded as a codeword $\ev_{\calP}(f_{\bm})$ using $O(N_n\log N_n)$ operations of $\F_q$.

\textbf{Case 2: $q-1$ is smooth.}  In this case, we consider the Kummer extension $E_{i+1}/E_i$. By the same discussion in Example~\ref{ex:hc}(Case 2), we can show that every extension $E_{i+1}/E_i$ satisfies all four properties in the box (P4). Thus, the MPE of functions in $\cL(\lambda_nP_{\infty}^{(n)})$ can be recursively reduced to the MPE of functions in $\cL(\lfloor \lambda_n/{\Gk^{n-1}}\rfloor P_{\infty}^{(1)})$. 
In this case, the basis of $\cL(\lambda_nP_{\infty}^{(n)})$ of $E_n$ can be given explicitly as
\[\left\{y_1^{j_1}y_2^{j_2}\cdots y_n^{j_n}:\; (j_1,\dots,j_n)\in\mathbb{N}^n,\ \sum_{i=1}^nj_i\Gk^{i-1}({\Gk}+1)^{n-i}\le \Gl, 0\le j_2,\cdots,j_n\le \kappa\right\}.\]
Finally, in the same way, we can show that algebraic geometry codes from the function field $E_n$ also have encoding complexity $O(N \log N )$.

\bibliographystyle{alpha}
\bibliography{sjtu_li}

\end{document}